\numberwithin{equation}{section}
\theoremstyle{plain}
\newtheorem{theorem}[equation]{Theorem}
\newtheorem{lemma}[equation]{Lemma}
\newtheorem{corollary}[equation]{Corollary}
\newtheorem{hypothesis}[equation]{Hypothesis}
\theoremstyle{definition}
\newtheorem{definition}[equation]{Definition}
\theoremstyle{remark}
\newtheorem{example}[equation]{Example}
\newcommand{\R}{\mathbb{R}}
\newcommand{\Conn}{{W_c}}
\newcommand{\RR}{{\mathbb{R}}}
\newcommand{\NN}{{\mathbb{N}}}
\newcommand{\ZZ}{{\mathbb{Z}}}
\newcommand{\eps}{\varepsilon}
\newcommand{\bp}{\noindent {\it Proof}.\,\,}
\newcommand{\ep}{\hfill$\Box$ \vskip 0.08in}
\newcommand{\vp}{\varphi}
\renewcommand{\emptyset}{\varnothing}
\begin{document}

\title{The effective potential of an $M$-matrix}

\author{Marcel Filoche}
\address{Marcel Filoche, Laboratoire de Physique de la Matière Condensée, Ecole Polytechnique, CNRS, IP Paris, 91128, Palaiseau, France}
\author{Svitlana Mayboroda}
\address{Svitlana Mayboroda, School of Mathematics, University of Minnesota, Minneapolis, Minnesota, 55455, USA}
\author{Terence Tao}
\address{Terence Tao, Department of Mathematics, UCLA, Los Angeles CA 90095-1555, USA}

\date{}

\begin{abstract}
In the presence of a confining potential $V$, the eigenfunctions of a continuous Schr\"odinger operator $-\Delta +V$ decay exponentially with the rate governed by the part of $V$ which is above the corresponding eigenvalue; this can be quantified by a method of Agmon. Analogous localization properties can also be established for the eigenvectors of a discrete Schr\"odinger matrix. This note shows, perhaps surprisingly, that one can replace a discrete Schr\"odinger matrix by \emph{any} real symmetric $Z$-matrix and still obtain eigenvector localization estimates.  In the case of a real symmetric non-singular $M$-matrix $A$ (which is a situation that arises in several contexts, including random matrix theory and statistical physics), the \emph{landscape function} $u = A^{-1} 1$ plays the role of an effective potential of localization. Starting from this potential, one can create an Agmon-type distance function governing the exponential decay of the eigenfunctions away from the ``wells" of the potential, a typical eigenfunction being localized to a single such well. 
 \end{abstract}

\maketitle




\section{Introduction: history and motivation}

The fundamental premises of quantum physics guarantee that a potential $V$ induces exponential decay of the eigenfunctions of the Schr\"odinger operator $-\Delta +V$ (on either a continuous domain $\RR^d$ or a discrete lattice $\ZZ^d$) as long as $V$ is larger than the eigenvalue $E$ outside of some compact region. This heuristic principle has been established with mathematical rigor by S. Agmon~\cite{AgmonBook} and has served as a foundation to many beautiful results in semiclassical analysis and other fields (see, e.g., Refs~\citenum{H,HS,S1,S2} for a glimpse of some of them). Roughly speaking, the modern interpretation of this principle is that the eigenfunctions decay exponentially away from the ``wells" $\{x: V(x)\leq E\}$.

In 2012, two of the authors of the present paper introduced the concept of the \emph{localization landscape}. They observed in Ref.~\citenum{FM-PNAS} that the solution $u$ to the equation $(-\Delta +V)u=1$ appears to have an almost magical power to ``correctly" predict the regions of localization for disordered potentials $V$ and to describe a precise picture of their exponential decay. For instance, if $V$ takes the values $0$ and $1$ randomly on a two-dimensional lattice $\ZZ^2$ (a classical setting of the Anderson--Bernoulli localization) the eigenfunctions at the bottom of the spectrum are exponentially localized, that is, exponentially decaying away from some small region, but this would not be detected by the Agmon theory because the region $\{V\leq E\}$ could be completely percolating and there is no ``room" for the Agmon-type decay, especially if the probability of $V=0$ is larger than the probability of $V=1$. And indeed, the phenomenon of Anderson localization is governed by completely different principles, relying on the interferential rather than confining impact of $V$. On the other hand, looking at the landscape in this example, we observe that the region $\{\frac{1}{u}\leq E\}$ exhibits isolated wells and that the eigenmodes decay exponentially away from these wells. It turns out that indeed, the reciprocal of the landscape, $\frac{1}{u}$, plays the role of an {\it effective potential}, and in Ref.~\citenum{ADFJM-math} Arnold, David, Jerison, and the first two authors have proved that the eigenfunctions of $-\Delta +V$ decay exponentially in the regions where $\{\frac{1}{u}>E\}$ with the rate controlled by the so-called \emph{Agmon distance} associated to the landscape, a geodesic distance in the manifold determined by $(\frac{1}{u}-E)_+$. The numerical experiments in Ref.~\citenum{ADFJM-comp} and physical considerations in Ref.~\citenum{ADFJM-phy} show an astonishing precision of the emerging estimates, although mathematically speaking in order to use these results for factual disordered potentials one has to face, yet again, a highly non-trivial question of resonances -- see the discussion in Ref.~\citenum{ADFJM-math}. At this point we have only successfully treated Anderson potentials via the localization landscape in the context of a slightly different question about the integrated density of states~\cite{LL}.

However, the scope of the landscape theory is not restricted to the setting of disordered potentials. In fact, all results connecting the eigenfunctions to the landscape are purely deterministic, and one of the key benefits of this approach is the absence of \emph{a priori} assumptions on the potential $V$, which already in Ref.~\citenum{ADFJM-math} allowed us to rigorously treat any operator $-\operatorname{div} A \nabla +V $ with an elliptic matrix of bounded measurable coefficients $A$ and any non-negative bounded potential $V$, a level of the generality not accessible within the classical Agmon theory. These ideas and results have been extended to quantum graphs~\cite{HM}, to the tight-binding model~\cite{WZh}, and perhaps most notably, to many-body localization in Ref.~\citenum{Galitski}. 

This paper shows that the applicability of the landscape theory in fact extends well beyond the scope of the Schr\"odinger operator, or, for that matter, even the scope of PDEs, at least in the bottom of the spectrum where the region $\{\frac{1}{u} \leq E \}$ exhibits isolated potential wells. Indeed, let us now consider a general real symmetric positive definite $N \times N$ matrix $A = (a_{ij})_{i,j \in [N]}$, which one can view as a self-adjoint operator on the Hilbert space $\ell^2([N])$ on the domain $[N] \coloneqq \{1,\dots,N\}$. In certain situations one expects $A$ to exhibit ``localization'' in the following two related aspects, which we describe informally as follows:
\begin{itemize}
\item[(i)]  (Eigenvector localization) Each eigenvector
\footnote{One can also study the closely related phenomenon of localization of Green's functions $(A-z)^{-1}$.  This latter type of localization is also related to the spectrum of associated infinite-dimensional operators consisting of pure point spectrum, thanks to such tools as the Simons--Wolff criterion~\cite{sw}.}
$\phi = (\phi_k)_{k \in [N]}$ of $A$ is localized to some index $i$ of $[N]$, so that $|\phi_k|$ decays when $|k-i|$ exceeds some localization length $L \ll N$.
\item[(ii)]  (Poisson statistics) The local statistics of eigenvalues $\lambda_1,\dots,\lambda_N$ of $A$ asymptotically converge to a Poisson point process in a suitably rescaled limit as $N \to \infty$.
\end{itemize}
Empirically, the phenomena (i) and (ii) are observed to occur in the same matrix ensembles $A$; intuitively, the eigenvector localization property (i) implies that $A$ ``morally behaves like'' a block-diagonal matrix, with the different blocks of $A$ supplying ``independent'' sets of eigenvalues, thus leading to the Poisson statistics in (ii).  However, the two properties (i), (ii) are not formally equivalent; for instance, conjugating $A$ by a generic unitary matrix will most likely destroy property (i) without affecting property (ii).   

\begin{example}[Gaussian band matrices]\label{gbm}  Consider the random band matrix Gaussian models $A$, in which the entries $a_{ij}$ are independent Gaussians for $1 \leq i \leq j \leq N$ and $|i-j| \leq W$, but vanish for $|i-j| > W$, for some $1 \leq W \leq N$.  We refer the reader to Ref.~\citenum{Bourgade} (\S 2.2) for a recent survey of this model.  If the matrix is normalized to have eigenvalues $E$ concentrated in the interval $[-2,2]$ (and expected to obey the Wigner semicircular law $\displaystyle \frac{1}{2\pi} (4-E^2)_+^{1/2}$ for the asymptotic density of states), it is conjectured (see e.g., Ref.~\citenum{FM}) that the localization length $L$ should be given by the formula
$$ L \sim \min( W^2 (4-E^2), N );$$
in particular, in the bulk of the spectrum, it is conjectured that localization (in both senses (i), (ii)) should hold when $W \ll N^{1/2}$ (with localization length $L \sim W^2$) and fail when $W \gg N^{1/2}$, while near the edge of the spectrum (in which $4-E^2 = O(L^{-2/3})$)  localization is expected to hold when $W \ll N^{5/6}$ and fail when $W \gg N^{5/6}$.  Towards this conjecture, it is known~\cite{BYY} in the bulk $4-E^2 \sim 1$ that (i), (ii) both fail when $W \gg N^{3/4+\eps}$ for any fixed $\eps>0$, while localization in sense (i) was established for $W \ll N^{1/7}$ in Ref.~\citenum{peled} (see also Ref.~\citenum{Sh}).  In the edge $4-E^2 = O(L^{-2/3})$, both directions of the conjecture have been verified in sense (ii) in Ref.~\citenum{Sodin}, but the conjecture in sense (i) remains open. Finally, we remark that in the regime $W=O(1)$ the classical theory of Anderson localization~\cite{Anderson1958} can be used to establish both (i) and (ii).
\end{example}

We now focus on the question of establishing eigenvector localization (i).  Can one deduce any uniform bound on the eigenvectors of a general matrix $A$ depicting, in particular, a structure of the exponential decay similarly to the aforementioned considerations for a matrix of the Schr\"odinger operator $-\Delta +V$? An immediate objection is that there is no ``potential" that could play the role of $V$. Even aside from the fact that the proof of the Agmon decay relies on the presence of both kinetic and potential energy, as well as on many PDE arguments, it is not clear whether there is a meaningful function, analogous to $V$, which governs the behavior of eigenvectors of a general matrix. The main result of this paper is that, surprisingly, the landscape theory still works, at least in the class of real symmetric $Z$-matrices (matrices with non-positive entries off the diagonal).  Furthermore, when $A$ is a real symmetric non-singular $M$-matrix (a positive semi-definite $Z$-matrix), the reciprocal $\frac{1}{u}$ of the solution to $Au=1$ gives rise to a distance function $\rho$ on the index set $[N]$ which predicts the exponential decay of the eigenvectors.

\section{Main results}

We introduce an Agmon-type distance $\rho$ on the index set $[N] \coloneqq \{1,\dots,N\}$ associated to an $N \times N$ matrix $A$, a $N \times 1$ ``landscape'' vector $u$, and an additional spectral parameter $E \in \R$:

\begin{definition}[Distance]\label{distance-def}  Let $A = (a_{ij})_{i,j \in [N]}$ be a real symmetric $N \times N$ matrix, let $u= (u_i)_{i \in [N]}$ be a vector with all entries non-zero, and let $E$ be a real number.  We define the \emph{effective potential} $\overline{V} = (\overline{v}_i)_{i \in [N]}$ by the formula
\begin{equation}\label{vv-def}
\overline{v}_i\coloneqq \frac{(Au)_i}{u_i},
\end{equation}
the \emph{shifted effective potential} by the formula
\begin{equation}\label{vi-def}
v_i \coloneqq \left( \overline{v}_i - E \right)_+
\end{equation}
(where $x_+ \coloneqq \max(x,0)$), the \emph{potential well set} by the formula
$$ K_E \coloneqq \{ i \in [N]: v_i = 0 \} = \left\{ i \in [N]: \frac{(Au)_i}{u_i} \leq E \right\},$$
and the \emph{distance function} $\rho = \rho_{A,u,E} \colon [N] \times [N] \to [0,+\infty]$ by the formula
$$\rho(i,j) \coloneqq \inf_{L \geq 0} \inf_{i_0,\dots,i_L \in [N]: i_0=i, i_L = j} \left( \sum_{\ell=0}^{L} \ln \left(1+ \sqrt{\frac{\sqrt{v_{i_\ell} v_{i_{\ell+1}}}}{|a_{i_\ell i_{\ell+1}}|}} \right) \right)$$
where we restrict the infimum to those paths $i_0,\dots,i_L$ for which $a_{i_\ell i_{\ell+1}} \neq 0$ for $\ell=0,\dots,L-1$.  
To put it another way, $\rho$ is the largest pseudo-metric such that
\begin{equation}\label{rhoj}
 \rho(i,j) \leq \ln \left(1+ \sqrt{\frac{\sqrt{v_i v_j}}{|a_{ij}|}} \right) 
\end{equation}
whenever $a_{ij} \neq 0$.

For any set $M\subset [N]$ we denote by $\rho(i,M) \coloneqq  \inf_{j\in M}\rho(i, j)$ the distance from a given index $i$ to $M$ using the distance $\rho$ (with the convention that $\rho(i,M)=\infty$ if $M$ is empty).  Similarly, for any set $K \subset [N]$, we define $\rho(K,M) \coloneqq \inf_{i \in K} \rho(i,M)$ for the separation between $K$ and $M$.
\end{definition}

It is easy to see that $\rho$ is a pseudo-metric in the sense that it is symmetric and obeys the triangle inequality with $\rho(i,i)=0$, although without further hypotheses\footnote{For instance, if we assume that $A$ is \emph{irreducible} in the sense that it cannot be expressed (after permuting indices) as a block-diagonal matrix, then $\rho(i,j)$ will always be finite.} on $A,u,E$ it is possible that $\rho(i,j)$ could be zero or infinite for some $i \neq j$. One can view $\rho$ as a weighted graph metric on the graph with vertices $[N]$ and edges given by those pairs $(i,j)$ with $a_{ij} \neq 0$, and with weights given by the right-hand side of \eqref{rhoj}. We discuss the comparison between $\rho$ and the Euclidean metric in the beginning of Section~\ref{resonant}. 

We recall that a \emph{$Z$-matrix} is any $N \times N$ matrix $A$ such that $a_{ij} \leq 0$ when $i\neq j$, and a \emph{$M$-matrix} is a $Z$-matrix with all eigenvalues having non-negative real part. Our typical set-up is the case when $A$ is a real symmetric non-singular $M$-matrix, i.e., a positive definite matrix with non-positive off-diagonal entries, and in that case we will choose $u$ as the landscape function, i.e., the solution to $Au=1$, with $1$ denoting a vector with all values equal to~$1$.  We say that a matrix $A$ has \emph{connectivity} at most $\Conn$ if every row and column has at most $\Conn$ non-zero non-diagonal entries. If $A$ is a real symmetric non-singular $M$-matrix, all the principal minors are positive (see e.g., Ref.~\citenum{plemmons}), and hence by Cram\'er's rule all the coefficients of the landscape $u$ will be non-negative.  In this case, a simple form of our main results is as follows.

\begin{theorem}[Exponential localization using landscape function]\label{c2.3}
Let $A$ be a symmetric $N\times N$ $M$-matrix with connectivity at most $\Conn$ for some $\Conn \geq 2$.
Let $u \coloneqq A^{-1} 1$ be the landscape function.  Assume that $\vp$ is an $\ell^2$-normalized eigenvector of $A$ corresponding to the eigenvalue $E$. Let $\rho = \rho_{A,u,E}$, $K = K_E$ be defined by Definition \ref{distance-def}.
Then
$$\sum_k \vp_k^2 \,e^{\frac{2\rho(k,K)}{\sqrt{\Conn}} }\, \left( \frac{1}{u_k} -E \right)_+ \leq \Conn\max_{1 \leq i,j \leq N} |a_{ij}|.$$
\end{theorem}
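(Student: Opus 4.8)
The plan is to carry out a discrete version of Agmon's exponential‑decay argument, in which the landscape $u$ is used to perform a ``ground‑state substitution'' that exposes the effective potential $\overline{v}_i=1/u_i$, and then a weighted energy estimate with the Agmon weight $w_k:=\exp\!\bigl(\rho(k,K)/\sqrt{\Conn}\bigr)$ produces the stated bound. Two algebraic identities drive everything. Since $Au=1$, expanding the quadratic form and symmetrizing gives, for any vector $f$,
\[
\langle A(uf),uf\rangle \;=\; \sum_i u_i f_i^2 \;-\; \tfrac12\sum_{i\neq j} a_{ij}u_iu_j(f_i-f_j)^2,
\]
where the last sum is $\le 0$ because $a_{ij}\le 0$ for $i\neq j$ and $u\ge 0$. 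Applying this with $f=w\varphi/u$ yields $\langle A(w\varphi),w\varphi\rangle=\sum_i\overline{v}_i w_i^2\varphi_i^2+T_w$ with a nonnegative ``kinetic'' remainder $T_w$. Secondly, since $(A-E)\varphi=0$, the usual commutator identity gives $\langle(A-E)(w\varphi),w\varphi\rangle=-\tfrac12\sum_{i,j}a_{ij}(w_i-w_j)^2\varphi_i\varphi_j$. Combining the two and discarding $T_w\ge 0$ I obtain $\sum_i(\overline{v}_i-E)w_i^2\varphi_i^2\le -\tfrac12\sum_{i,j}a_{ij}(w_i-w_j)^2\varphi_i\varphi_j$.

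Next I would estimate the right‑hand side with $w_k=\exp\!\bigl(\rho(k,K)/\sqrt{\Conn}\bigr)$, noting that $w_k=1$ on $K$ since $\rho(\cdot,K)$ vanishes there. For $a_{ij}\neq 0$, the triangle inequality for $\rho$ together with \eqref{rhoj} gives $|\rho(i,K)-\rho(j,K)|\le\ln\!\bigl(1+\sqrt{\sqrt{v_iv_j}/|a_{ij}|}\,\bigr)$, and with the elementary bound $(1+t)^{1/\sqrt{\Conn}}\le 1+t/\sqrt{\Conn}$ this forces
\[
(w_i-w_j)^2 \;\le\; \frac{1}{\Conn}\,\min(w_i,w_j)^2\,\frac{\sqrt{v_iv_j}}{|a_{ij}|}\;\le\;\frac{1}{\Conn}\,\frac{w_iw_j\sqrt{v_iv_j}}{|a_{ij}|}.
\]
Substituting this, cancelling $|a_{ij}|$, writing $w_iw_j\sqrt{v_iv_j}\,|\varphi_i||\varphi_j|=(w_iv_i^{1/2}|\varphi_i|)(w_jv_j^{1/2}|\varphi_j|)$, and using the degree inequality $\sum x_ix_j\le\Conn\sum_i x_i^2$ (sum over ordered pairs $i\neq j$ with $a_{ij}\neq 0$; valid because every vertex of the underlying graph has at most $\Conn$ neighbours) with $x_i=w_iv_i^{1/2}|\varphi_i|$, the right‑hand side becomes at most $\tfrac12\sum_k w_k^2 v_k\varphi_k^2$. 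Since $v_k=0$ and $w_k=1$ on $K$, the left‑hand side equals $\sum_k w_k^2v_k\varphi_k^2-\sum_{k\in K}(E-\overline{v}_k)\varphi_k^2$, so I arrive at
\[
\tfrac12\sum_k \varphi_k^2\,e^{2\rho(k,K)/\sqrt{\Conn}}\,\Bigl(\tfrac{1}{u_k}-E\Bigr)_+ \;\le\; \sum_{k\in K}(E-\overline{v}_k)\varphi_k^2 .
\]

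It remains to bound the ``well'' term $\sum_{k\in K}(E-\overline{v}_k)\varphi_k^2$ by $\tfrac{\Conn}{2}\max_{i,j}|a_{ij}|$, and this is the step where I expect the main work to lie: the crude bound by $E\|\varphi\|_2^2=E$ is not by itself enough, so one must exploit the $M$‑matrix structure. From $Au=1$ with $u\ge 0$ and $a_{ij}\le 0$ one gets $a_{ii}u_i=1+\sum_{j\neq i}|a_{ij}|u_j\ge 1$, hence $\overline{v}_i=1/u_i\le a_{ii}\le\max_{i,j}|a_{ij}|$ for every $i$; writing $A=D-B$ with $B\ge 0$ and using positive‑definiteness one further gets $E\le\lambda_{\max}(A)<2\max_i a_{ii}$. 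Combining these with $\|\varphi\|_2=1$, the $w\equiv 1$ case of the first identity ($\sum_i\overline{v}_i\varphi_i^2=E-T$ with $T\ge 0$), and the estimate $T\le 2\sum_i(a_{ii}-\overline{v}_i)\varphi_i^2$ (from $(\psi_i-\psi_j)^2\le 2\psi_i^2+2\psi_j^2$, $\psi:=\varphi/u$), the well term reduces to a combination of $\sum_i\varphi_i^2(a_{ii}-\overline{v}_i)$ and $\sum_{i\neq j}|a_{ij}||\varphi_i||\varphi_j|$, each controlled by $\Conn\max_{i,j}|a_{ij}|$ through the degree inequality. The delicate point throughout is to organize the absorption step and the Schur‑type bound so that the exponent comes out exactly $2/\sqrt{\Conn}$ and the constant exactly $\Conn$; with looser bookkeeping one only obtains a fixed multiple of the claimed right‑hand side, and removing that multiple is precisely where the argument has to be sharp. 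Everything else is routine computation.
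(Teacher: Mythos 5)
Your first two displays and the absorption step are correct, and they are essentially a quadratic-form rewriting of the paper's mechanism: your ground-state substitution identity with the sign conditions $a_{ij}\le 0$, $u\ge 0$ plays exactly the role of the double-commutator negativity \eqref{neg} and of Lemma \ref{dc}/Corollary \ref{dc-cor}, and your Lipschitz bound on the weight plus the degree (Schur) inequality matches \eqref{g-bound} and the paper's Schur test, with the same exponent $\alpha=1/\sqrt{\Conn}$. The genuinely different choice is the weight itself: you take $w\equiv 1$ on the well set $K$, whereas the paper takes its weight to \emph{vanish} on $K$. Your choice makes the cross terms between $K$ and $K^c$ disappear (since $v_j=0$ forces $\rho(i,K)=0$ for neighbours of $K$), but it leaves the well term $\sum_{k\in K}(E-\overline v_k)\varphi_k^2$ on the right-hand side, and the whole burden of the theorem's constant now rests on bounding that term by $\tfrac{\Conn}{2}\max_{i,j}|a_{ij}|$.

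That last inequality is not provable because it is false in general, so the final paragraph of your proposal is a genuine gap, not routine bookkeeping. Take $\Conn=2$ and $A$ the one-dimensional Dirichlet discrete Laplacian plus $\eps I$ (so $a_{kk}=2+\eps$, $a_{k,k\pm1}=-1$), a symmetric nonsingular $M$-matrix, and let $\varphi$ be the top eigenvector, $E=\lambda_{\max}\approx 4+\eps$. Then $u=A^{-1}1\approx 1/\eps$ in the bulk, so $\overline v_k\approx\eps$ there, $K_E=[N]$, and $\sum_{k\in K}(E-\overline v_k)\varphi_k^2\approx 4$, which exceeds $\tfrac{\Conn}{2}\max_{i,j}|a_{ij}|=2+\eps$. (The theorem itself is vacuous in this example since $v\equiv 0$, which is precisely why no bound of that strength on the well term can be extracted from it.) Your sketched repairs --- $\overline v_i\le a_{ii}$, $E\le 2\max_i a_{ii}$, $T\le 2\sum_i(a_{ii}-\overline v_i)\varphi_i^2$ --- are individually correct but at best give the well term $\le 2\max_{i,j}|a_{ij}|$, hence a final constant $4$ instead of $\Conn$, losing the statement for $\Conn=2,3$; no amount of sharpening can do better, because the intermediate inequality you are aiming at is false. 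The paper's way out is to modify the weight, not the well-term estimate: it sets $G_{kk}=0$ for $k\in K$ and $G_{kk}=e^{\alpha\rho(k,K)}$ otherwise, so the well term never appears; the price is the pairs with $i\in K$, $j\notin K$, where $(G_{ii}-G_{jj})^2=1$, and Schur's test bounds their total contribution by $\tfrac{\Conn}{2}\max_{i\in K,\,j\notin K}|a_{ij}|\,\|\varphi\|^2$, which is what produces the stated constant $\Conn\max_{i,j}|a_{ij}|$ for every $\Conn\ge 2$. If you replace your $w$ by this cut-off weight, the rest of your argument goes through essentially verbatim.
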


Informally, the above inequality ensures that an eigenvector $\vp$ experiences exponential decay away from the wells of the effective potential $\overline{V} = (\frac{1}{u_k})_{k \in [N]}$ cut off by the energy level $E$. This is what typically happens for the Schr\"odinger operator  $-\Delta + V$ (according to some version of the Agmon theory); see for instance \cite[Corollary 4.5]{ADFJM-math}.  However, the existence of such an effective potential for an \emph{arbitrary} $M$-matrix is perhaps surprising. 

In fact our results apply to the larger class of real symmetric $Z$-matrices $A$ and more general vectors $u$, and can handle ``local'' eigenvectors as well as ``global'' ones. We first introduce some more notation.

\begin{definition}[Local eigenvectors]\label{local} Let $M\subset [N]$.  We use $I_M$ to denote the $N \times N$ diagonal matrix with $(I_M)_{ii}$ equal to $1$ when $i \in M$ and $0$ otherwise.  If $\vp \in \ell^2([N])$, we write $\vp|_M \coloneqq I_M \vp$ for the restriction of $\vp$ to $M$ (extending by zero outside of $M$), and similarly if $A$ is an $N \times N$ matrix we write $A|_M \coloneqq I_M A I_M$ for the restriction of $A$ to $M \times M$ (again extending by zero).  We say that a vector $\vp \in \ell^2([N])$ is a \emph{local eigenvector} of $A$ on the domain $M$ with eigenvalue $E$ if $\vp = \vp|_M$ is an eigenvector of $A|_M$ with eigenvalue $E$, thus $I_M \vp=\vp$ and $I_M A I_M \vp = E \vp$.
\end{definition}

To avoid confusion we shall sometimes refer to the original notion of an eigenvector as a \emph{global eigenvector}; this is the special case of a local eigenvector in which $M = [N]$.

We can now state a more general form of Theorem \ref{c2.3}.

\begin{theorem}[Exponential localization]\label{t1.1.1}
Let $A$ be a symmetric $N\times N$ $Z$-matrix with connectivity at most $\Conn$ for some $\Conn \geq 2$, and let $u$ be some $n\times 1$ vector of non-negative coefficients. Let $\overline{E}>0$ be an energy threshold, and let $\rho = \rho_{A,u,\overline E}$, $v_i$, and $K_{\overline{E}}$ be defined by Definition \ref{distance-def}.
Then for any subset $D$ of $[N]$ and any local eigenvector $\vp$ of $A$ of eigenvalue $E \leq \overline{E}$ on $D^c=[N]\setminus D$, one has
\begin{multline} \label{eq1.1.2}
(\overline E -E) \sum_{k \notin K_{\overline{E}}} |\vp_k|^2 e^{2\alpha \rho(k,K_{\overline{E}}\setminus D)} \\
+ \left(1-\frac{\alpha^2 \Conn}{2}\right)\sum_{k \notin K_{\overline{E}}} |\vp_k|^2 e^{2\alpha \rho(k,K_{\overline{E}}\setminus D)} v_k \\
\leq   \frac {\Conn}{2}  \|\vp\|^2 \max_{i\in K_{\overline{E}}\setminus D, \, j\not\in K_{\overline{E}}\setminus D} |a_{ij}|, 
\end{multline}
for any $0<\alpha\leq \sqrt{2/\Conn}$.  (Here and in the sequel we use $\| \cdot \|$ to denote the $\ell^2([N])$ norm.) 

In particular, if $\alpha = \sqrt{1/\Conn}$, $\overline E =E$, $D = \emptyset$, and $\vp$ is an $\ell^2$-normalized (global) eigenvector of $A$ on the entire domain $[N]$ with the eigenvalue $E$, \eqref{eq1.1.2} implies that
\begin{equation}\label{gen}
\sum_{k \in [N]} \vp_k^2 \,e^{\frac{2\rho(k,K_E)}{\sqrt{\Conn}} }\, v_k \leq \Conn \max_{i,j \in [N]} |a_{ij}|.
\end{equation}
\end{theorem}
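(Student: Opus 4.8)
The plan is to run an Agmon-type energy estimate after passing to the ``ground-state'' (Doob-transformed) coordinates determined by $u$. First I would write $\vp=u\psi$, i.e.\ $\psi_k=\vp_k/u_k$ (legitimate since all $u_k>0$), and record the discrete ground-state representation: using only $a_{ij}=a_{ji}$ and $a_{ij}\le 0$ for $i\neq j$,
\[
\langle A(u\phi),u\theta\rangle=\sum_i\overline v_i\,u_i^2\phi_i\theta_i+\tfrac12\sum_{i\neq j}(-a_{ij})u_iu_j(\phi_i-\phi_j)(\theta_i-\theta_j),\qquad \overline v_i=\tfrac{(Au)_i}{u_i}.
\]
Next I would set $g_k:=e^{\alpha\rho(k,K')}$ with $K':=K_{\overline E}\setminus D$, and let $\eta:=g\vp$, $\chi:=\eta/u$. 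Testing $I_{D^c}AI_{D^c}\vp=E\vp$ against $g^2\vp$ (which is supported in $D^c$) gives $\langle A\vp,g^2\vp\rangle=E\|\eta\|^2$; computing $\langle A\eta,\eta\rangle-\langle A\vp,g^2\vp\rangle$ entrywise and using the representation above for $\langle A\eta,\eta\rangle$, one arrives at the key identity
\[
\sum_i(\overline v_i-E)\eta_i^2+\underbrace{\tfrac12\sum_{i\neq j}(-a_{ij})u_iu_j(\chi_i-\chi_j)^2}_{=:\,\mathcal K\,\ge\,0}=\tfrac12\sum_{i\neq j}(-a_{ij})(g_i-g_j)^2\vp_i\vp_j .
\]

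Then I would estimate the right-hand side. Writing $(g_i-g_j)^2\vp_i\vp_j=\eta_i\eta_j\bigl(\tfrac{g_i}{g_j}+\tfrac{g_j}{g_i}-2\bigr)$, and using for $a_{ij}\neq0$ the triangle inequality for $\rho$ with \eqref{rhoj}, $|\ln(g_i/g_j)|\le\alpha\,\rho(i,j)\le\alpha\ln\!\bigl(1+\sqrt{\sqrt{v_iv_j}/|a_{ij}|}\bigr)$, together with the elementary bound $\sinh\!\bigl(\tfrac\alpha2\ln(1+t)\bigr)\le\tfrac\alpha2 t$ (for $0\le\alpha\le1$, $t\ge0$), I get $(-a_{ij})\bigl(\tfrac{g_i}{g_j}+\tfrac{g_j}{g_i}-2\bigr)\le\alpha^2\sqrt{v_iv_j}$ (both sides vanish when $a_{ij}=0$ or $v_iv_j=0$). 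Hence, with $c_k:=|\eta_k|\sqrt{v_k}$ (which vanishes on $K_{\overline E}$), the right-hand side is $\le\tfrac{\alpha^2}{2}\sum_{i\neq j}c_ic_j$, and $c_ic_j\le\tfrac12(c_i^2+c_j^2)$ with the connectivity bound $\#\{j:a_{ij}\neq0\}\le\Conn$ give $\sum_i(\overline v_i-E)\eta_i^2+\mathcal K\le\tfrac{\alpha^2\Conn}{2}\sum_{k\notin K_{\overline E}}\eta_k^2v_k$. Splitting $\overline v_i-E=v_i+(\overline E-E)-(\overline E-\overline v_i)_+$, and noting $(\overline E-\overline v_i)_+$ is supported on $K_{\overline E}$ where $\eta_i=\vp_i$ on $K'$ and $\eta_i=0$ on $D$, I would rearrange this into
\[
(\overline E-E)\sum_{k\notin K_{\overline E}}\eta_k^2+\Bigl(1-\tfrac{\alpha^2\Conn}{2}\Bigr)\sum_{k\notin K_{\overline E}}\eta_k^2v_k\le\sum_{i\in K'}(E-\overline v_i)\vp_i^2-\mathcal K .
\]

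It then remains to bound the right-hand side by $\tfrac{\Conn}{2}\|\vp\|^2\max_{i\in K',\,j\notin K'}|a_{ij}|$. For this I would go back to the plain eigenvalue equation $(A\vp)_i=E\vp_i$ (valid for $i\in D^c\supseteq K'$): dividing by $u_i$ and using $a_{ii}=\overline v_i+\sum_{j\neq i}(-a_{ij})u_j/u_i$ gives $(E-\overline v_i)\vp_i^2=\sum_{j\neq i}(-a_{ij})u_iu_j\psi_i(\psi_i-\psi_j)$. Summing over $i\in K'$ and symmetrizing the portion with both indices in $K'$, that portion equals $\tfrac12\sum_{i,j\in K',\,i\neq j}(-a_{ij})u_iu_j(\psi_i-\psi_j)^2$, which is exactly the restriction of $\mathcal K$ to $K'\times K'$ (there $\chi=\psi$) and is absorbed by $-\mathcal K$; what survives involves only ``cross'' interactions $(-a_{ij})$ with $i\in K'$, $j\notin K'$, which — after pairing the diagonally-weighted $\tfrac{u_j}{u_i}\vp_i^2$-type contributions against the remaining cross-terms of $\mathcal K$ and applying $|\vp_i\vp_j|\le\tfrac12(\vp_i^2+\vp_j^2)$ with the connectivity bound — sum to at most $\tfrac{\Conn}{2}\|\vp\|^2\max_{i\in K',\,j\notin K'}|a_{ij}|$. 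I expect this last step to be the main obstacle: the ``bulk'' Agmon inequality above is essentially forced by the ground-state representation and the definition of $\rho$, whereas showing that the leftover energy on the well is absorbed by $\tfrac{\Conn}{2}\|\vp\|^2\max|a_{ij}|$ — in particular controlling the terms $(-a_{ij})(u_j/u_i)\vp_i^2$ that appear when $\langle A\vp,\vp\rangle$ is written in ground-state coordinates on the well — is the delicate point, and it is here that $\Conn\ge2$ and $\alpha\le\sqrt{2/\Conn}$ (so that $1-\alpha^2\Conn/2\ge0$) are genuinely used. Finally, combining the last two displays gives \eqref{eq1.1.2}, and \eqref{gen} is the special case $\alpha=1/\sqrt\Conn$, $\overline E=E$, $D=\emptyset$, $\vp$ an $\ell^2$-normalized global eigenvector (the first term drops, the coefficient becomes $\tfrac12$, and one multiplies through by $2$).
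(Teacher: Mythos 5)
Your argument is correct, and it is organized genuinely differently from the paper's. The paper runs everything through a double-commutator identity (its Lemma \ref{dc} and Corollary \ref{dc-cor}), which encodes the same algebra as your ground-state (Doob-transform) representation, but with the crucial difference that its Agmon weight is $G_{ii}=1_{i\notin K'}\,e^{\alpha\rho(i,K')}$ with $K'=K_{\overline E}\setminus D$, i.e.\ it \emph{vanishes} on the well: then the well contributes nothing on the left, the nonnegative Dirichlet term (your $\mathcal K$) is simply discarded, and the boundary term $\frac\Conn2\|\vp\|^2\max|a_{ij}|$ arises from the unit jump of $G$ across the edge of $K'$ (since $v_j=0$ for $j\in K'$ forces $\rho(i,j)=0$ for neighbours $i$, one has $(G_{ii}-G_{jj})^2=1$ there), with no second use of the eigenvalue equation. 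You instead take $g=e^{\alpha\rho(\cdot,K')}$ equal to $1$ on the well, keep $\mathcal K$, and must then remove the well terms $\sum_{i\in K'}(E-\overline v_i)\vp_i^2$ by testing the eigen-equation on $K'$ and cancelling against $\mathcal K$. The step you flag as the main obstacle does close, for the same reason that makes the paper's case analysis clean: for $i\in K'$, $j\notin K'$ with $a_{ij}\neq0$ one has $v_i=0$, hence $\rho(i,j)=0$ and $g_j=1$; expanding $u_iu_j\bigl[\psi_i(\psi_i-\psi_j)-(\chi_i-\chi_j)^2\bigr]$ the $\frac{u_j}{u_i}\vp_i^2$ contributions cancel exactly, leaving $(-a_{ij})\bigl[\vp_i\vp_j-\tfrac{u_i}{u_j}\vp_j^2\bigr]\le|a_{ij}|\,|\vp_i|\,|\vp_j|$, and Young plus the connectivity bound give precisely $\frac\Conn2\|\vp\|^2\max_{i\in K',\,j\notin K'}|a_{ij}|$ (terms with $j\in D$ vanish since $\vp_j=0$). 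Two small points you should add: the degenerate case $K_{\overline E}\setminus D=\emptyset$, where your weight is $+\infty$, needs the paper's one-line remark that then no nonzero local eigenvector with $E\le\overline E$ exists (apply your key identity with $g\equiv1$); and the hypothesis $\alpha\le\sqrt{2/\Conn}\le1$ (with $\Conn\ge2$) is what legitimises your $\sinh$ estimate — the analogue of the paper's $(1+t)^\alpha-1\le\alpha t$ — rather than being needed in the well step. The constants come out identical; the extra work on the well buys nothing quantitative here, though your route does make the discrete Dirichlet form $\mathcal K$ explicit, which the paper's vanishing-weight choice hides.
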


There are two terms on the left-hand side of \eqref{eq1.1.2}, corresponding to two different lines in the display, and they serve different purposes. The bound for the term in the second line (which in particular yields \eqref{gen}) asserts roughly speaking that the eigenvector $\vp_k$ experiences exponential decay in the regime where $k$ is far from $K_{\overline{E}}$ in the sense that $\rho(k,K_{\overline{E}}) \gg \sqrt{\Conn}$.  Note that Theorem \ref{c2.3} is the special case of \eqref{gen} when $A$ is a $M$-matrix and $u=A^{-1} 1$.

By taking advantage of the term in the first line of \eqref{eq1.1.2}, we can proceed further and demonstrate an approximate \emph{diagonalization}, or \emph{decoupling}, of $A$ on the collection of disjoint subregions defined by the landscape function $u$, by following the arguments from Ref.~\citenum{ADFJM-math}. The details are too technical to be put in the introduction, and we refer the reader to Section~\ref{resonant}. In short, the idea is that viewing $[N]$ as a graph induced by $A$ (with the vertices connected whenever $a_{ij}\neq 0$), we can define a Voronoi-type splitting of this graph into subgraphs, $\Omega_\ell$, each containing an individual connected component of $K_{\overline{E}}$ (or sometimes merging a few components if convenient). Then $A$ can be essentially decoupled into smaller matrices $A|_{\Omega_\ell}$ with the strength of coupling exponentially small in the $\rho_{A,u}$ distance between individual ``wells''. Related to this, the spectrum of $A$ will be exponentially close to the combined spectrum of $A|_{\Omega_\ell}$'s.

Note how the geometry of the metric $\rho$ is sensitive to the spatial distribution of the matrix $A$, and in particular to the connectivity properties of the graph induced by the locations of the nonzero locations of $A$. For instance, conjugating $A$ by a generic orthogonal matrix will almost certainly destroy the localization of the eigenvectors $\vp$, but will also heavily scramble the metric $\rho$ (and most likely also destroy the property of being an $M$-matrix or $Z$-matrix). On the other hand, conjugating $A$ by a permutation matrix will simply amount to a relabeling of the (pseudo-)metric space $([N],\rho)$, and not affect the conclusions of Corollary \ref{c2.3} and the decoupling results in Theorem~\ref{tDiag} and Corollary~\ref{C4.3} in any essential way.

We will show some results of the numerical simulations in the next section, and then pass to the proofs, but let us say a few more words about the particular cases which would perhaps be of most interest. 

{\bf Random band matrices.}  Here the connectivity is $\Conn = 2W$.  Strictly speaking, the random Gaussian band matrix models $A$ considered in Example \ref{gbm} do not fall under the scope of Corollary \ref{c2.3}, because the matrices will not be expected to have non-positive entries away from the diagonal, nor will they be expected to be positive definite.  However, one can modify the model to achieve these properties (at least with high probability), by replacing the Gaussian distributions by distributions supported on the negative real axis, and then shifting by a suitable positive multiple of the identity to ensure positive definiteness with high probability.  These changes will likely alter the semicircle law for the bulk distribution of eigenvalues, but in the spirit of the universality phenomenon, one may still hope to see localization of eigenvectors, say in the bulk of the spectrum, as long as the width $W$ of the band matrix is small enough (in particular when $W \ll N^{1/2}$).  In this case Corollary~\ref{c2.3} entails exponential decay of the eigenvectors governed by the landscape $\frac{1}{u}$ and Theorem~\ref{tDiag} and Corollary~\ref{C4.3} yield the corresponding diagonalization of $A$. Of course, the key question is the behavior of the landscape. If the set $K_{\overline{E}}$ of wells is localized to a short interval, then this corollary will establish localization in the spirit of (i) above; however, if $K_{\overline{E}}$ is instead the union of several widely separated intervals then an eigenvector could in principle experience a resonance in which non-trivial portions of its $\ell^2$ energy were distributed amongst two or more of these intervals. Whether or not this happens is governed to some extent by Theorem~\ref{tDiag} and Corollary~\ref{C4.3}. These results indicate that the resonances have to be exponentially strong in the distance between the wells, and our numerical experiments suggest that such strong resonances are in fact quite rare.   

{\bf Tight-binding Schr\"odinger operators.} When $A$ is a matrix of the tight-binding Schr\"odinger operator (a standard discrete Laplacian plus a potential) in a cube in $\ZZ^d$, the connectivity parameter $\Conn$ is now the number of nearest neighbors, $2d$, and the size of the matrix is the sidelength of the cube to the power $d$. If the potential is non-negative, $A$ is an $M$-matrix with the entries $a_{ij}$ equal to $-1$ whenever $i\neq j$ corresponds to the nearest neighbors in the graph structure induced by $\ZZ^d$, and $a_{ii}=2d+V_i$. This particular case has been considered in Ref.~\citenum{WZh} and our results clearly cover it. However, the tight-binding Schr\"odinger is only one of many examples, even when concentrating on applications in physics. We can treat any operator in the form $-\mathrm{div} A \nabla+V$ on any graph structure, provided that the signs of the coefficients yield an $M$-matrix. We can also address long range hopping for a very wide class of Hamiltonians.

{\bf Many-body system and statistical physics.} Much more generally, in statistical physics, the probability distribution over all possible microstates (or the density matrix in the quantum setting) of a given system evolves through elementary jumps between microstates. This evolution is a Markov process whose transition matrix is a $Z$-matrix which is symmetric up to a multiplication by a diagonal matrix. For a micro-reversible evolution, the matrix $A$ is symmetric and is akin to a weighted Laplacian on the high-dimensional indirect graph whose vertices are the microstates and whose edges are the possible transitions. 

One essential result of statistical physics is that, under condition of irreducibility of the transition matrix, the system eventually reaches thermodynamical equilibrium. Our approach might open the way to unravel the structure of the eigenvectors of the Markov flow, and thus to understand how localization of these eigenvectors can induce a many-body system to remain ``frozen" for mesoscopic times out of equilibrium. This effect is referred to as \emph{many-body localization}. A first successful implementation of the landscape theory in this context has been recently achieved by V.~Galitski and collaborators~\cite{Galitski} for a many-body system of spins with nearest-neighbor interaction. In this work, the authors cleverly use the ideas of Ref.~\citenum{BAA} to transfer the problem to the Fock space and to deduce an Agmon-type decay governed by the corresponding effective potential. Once in the Fock space, their results are also a particular case of Theorem~\ref{t1.1.1} and Theorem~\ref{tDiag}. From that point, however, the authors of Ref.~\citenum{Galitski} go much farther to discuss, based on physical considerations, deep implications of such an exponential decay on many-body localization, but in the present paper we restrict ourselves to mathematics and will not enter those dangerous waters.

Finally, we would like to mention that the idea of trying the localization landscape and similar concepts in the generality of random matrices has appeared before, e.g., in Refs~\citenum{St} and \citenum{B}. However, the authors relied on a different principle, extending the inequality $|\vp|\leq E u$ from Ref.~\citenum{FM-PNAS} to these more general contexts, which by itself, of course, does not prove exponential decay. Ref.~\citenum{St} actually deals with a different proxy for the landscape and different inequalities, but we (and the authors) believe that these are related to the landscape and that, again, they do not prove exponential decay estimates. However, we would like to mention that the importance of $M$-matrices was already suggested in Ref.~\citenum{B}, and it was inspiring and reassuring to arrive at the same setting from such different points of view.

\section{Numerical simulations}\label{sNum}

We ran numerical simulations to compute the localization landscape $u$, the effective potential $\frac{1}{u}$, and the eigenvectors for several realizations of random symmetric $M$-matrices. The diagonal coefficients are random variables which follow a centered normal law of variance 1. The off-diagonal coefficients belonging to the first $\Conn/2$ diagonals of the upper triangle of the matrix are minus the absolute values of random variables following the same law. The remaining off-diagonal coefficients of the upper triangle are taken to be zero, and the lower triangle is completed by symmetry. This creates $A_0$, a $Z$-matrix of bandwidth $\Conn+1$ (and connectivity $\Conn$). To ensure positivity, we add a multiple of the identity 
\begin{equation}
A: = A_0 + a~I \qquad \textrm{where} \qquad  a = \varepsilon - \lambda_0 ~,
\end{equation}
$\lambda_0$ being the smallest eigenvalue of $A_0$ and $\varepsilon = 0.1$. The smallest eigenvalue of the resulting matrix $A$ is thus $\varepsilon$. The matrices $A$ and $A_0$ clearly have the same eigenvectors and their spectra differ only by a constant shift.

Below are the results of several simulations. We take $N=10^3$. Figures~\ref{fig:Wc_2}-\ref{fig:Wc_32} correspond to random symmetric $M$-matrices constructed as above of connectivity $\Conn=2$, $6$, $20$, and $32$. Each figure consists of two frames:

The top frame displays the localization landscape $u$ superimposed with the first $5$~eigenvectors plotted in $\log_{10}$ scale. The exponential decay of the eigenvectors can clearly be observed on this frame for $\Conn=2$, $6$, and $20$. One can see that, as expected, it starts disappearing around $\Conn=32$ ($\Conn$ being in this case roughly equal to $\sqrt{N}$). It is important to observe that in all cases the eigenvectors decay exponentially except for the wells of $\frac{1}{u}$ (equivalently, the peaks of $u$) where they stay flat. This is exactly the prediction of  Theorem~\ref{t1.1.1}.

The bottom frame displays the effective potential $\frac{1}{u}$ superimposed with the first $5$~eigenvectors plotted in linear scale. The horizontal lines indicate the energies of the corresponding eigenvectors. One can clearly see the localization of the eigenvectors inside the wells of the effective potential.

\begin{figure}
\centering
\includegraphics[width=0.85\textwidth]{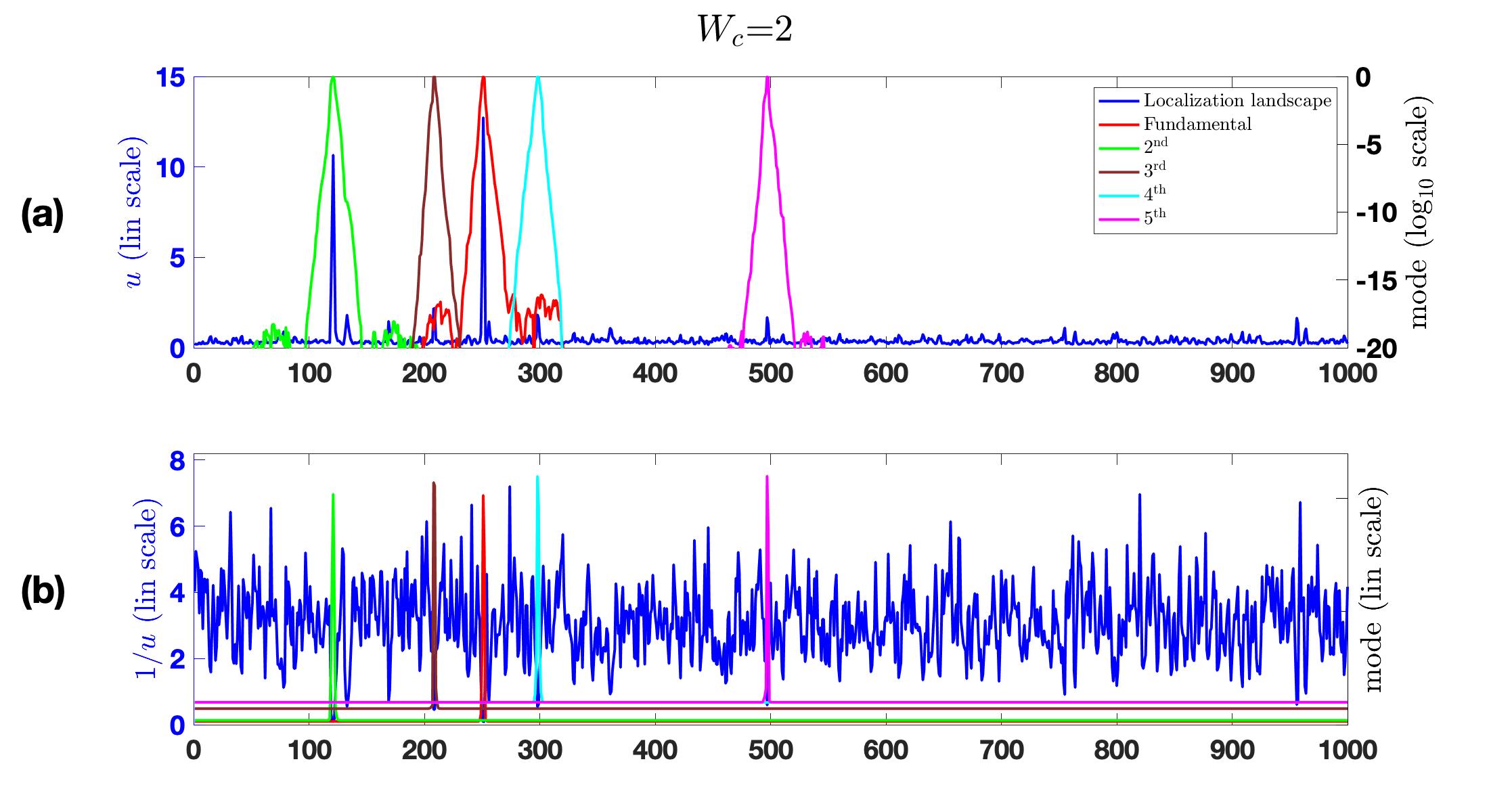}
\caption{(a) Localization landscape (blue line) and the $5$ first eigenvectors (in $\log_{10}$ scale) for a random $3$-band symmetric $M$-matrix. (b) Effective potential ($\frac{1}{u}$) and the first eigenvectors (in linear scale). The baseline (the 0 of the vertical axis) is chosen differently for each eigenvector so that it coincides with the eigenvalue of the same eigenvector of the left axis. This convention will be used in all Figures 1 to 4.}
\label{fig:Wc_2}
\end{figure}

\begin{figure}
\centering
\includegraphics[width=0.85\textwidth]{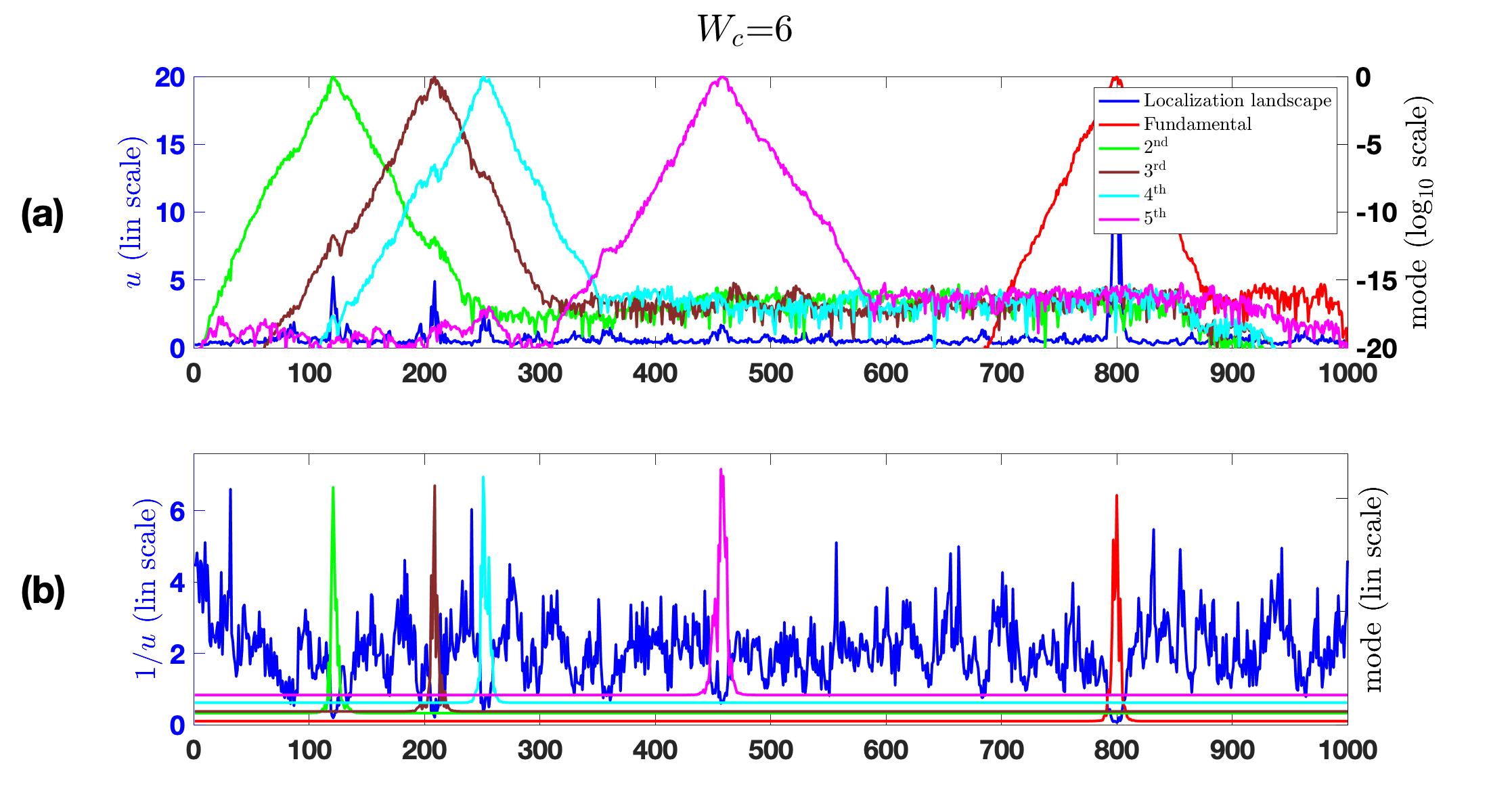}
\caption{(a) Localization landscape (blue line) and the $5$ first eigenvectors (in $\log_{10}$ scale) for a random $7$-band symmetric $M$-matrix. (b) Effective potential ($\frac{1}{u}$) and the first eigenvectors (in linear scale)}
\label{fig:Wc_6}
\end{figure}

\begin{figure}
\centering
\includegraphics[width=0.85\textwidth]{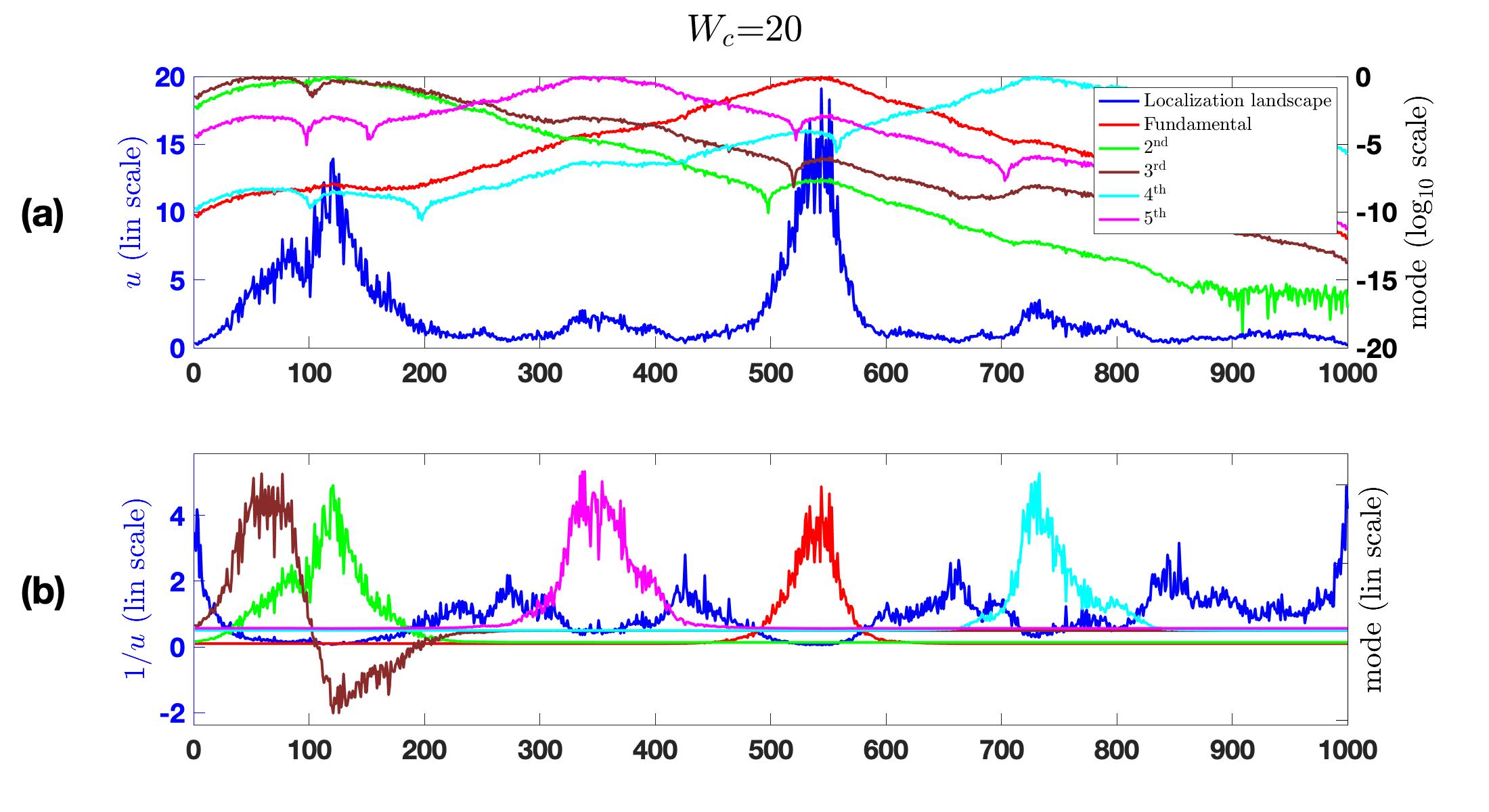}
\caption{(a) Localization landscape (blue line) and the $5$ first eigenvectors (in $\log_{10}$ scale) for a random $21$-band symmetric $M$-matrix. (b) Effective potential ($\frac{1}{u}$) and the first eigenvectors (in linear scale).}
\label{fig:Wc_20}
\end{figure}

\begin{figure}
\centering
\includegraphics[width=0.85\textwidth]{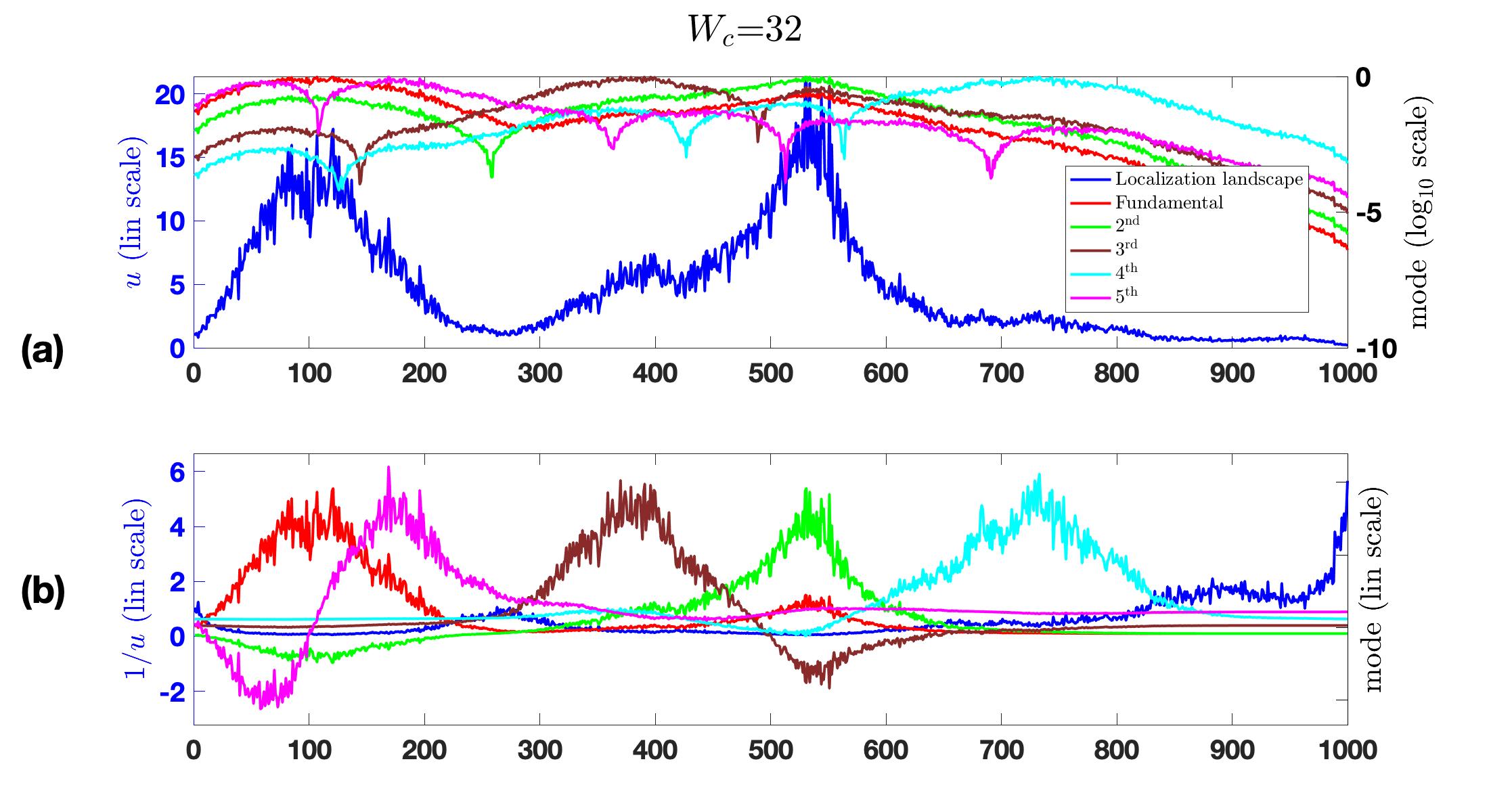}
\caption{(a) Localization landscape (blue line) and the $5$ first eigenvectors (in $\log_{10}$ scale) for a random $33$-band symmetric $M$-matrix. (b) Effective potential ($\frac{1}{u}$) and the first eigenvectors (in linear scale).}
\label{fig:Wc_32}
\end{figure}

Figure~\ref{fig:Agmon} provides numerical evidence for finer effects encoded in Theorems~\ref{t1.1.1} and \ref{tDiag}. The two Theorems combined prove exponential decay away from the wells of the effective potential governed the Agmon distance associated to $1/u$, at least in the absence of resonances. In Figure~\ref{fig:Agmon} we display, for several values of connectivity $\Conn$ and several eigenvectors, the values $-\ln|\psi_i|$ against the distance $\rho_{A,u, E}(i, i_{\max})$, taking as the origin the point $i_{\max}$ where $|\psi|$ is maximal, and using the corresponding eigenvalue as the threshold $E$. The linear correspondence down to $e^{-40}$ is quite remarkable and shows that  $e^{-c\rho_{A,u, E}(i, i_{\max})}$ is not only an upper bound, but actually an approximation of the eigenfunction, and that the resonances are indeed unlikely. On the other hand, the constant $c$ does not appear to be equal to $1/\sqrt{\Conn}$ which means that in this respect our analysis is probably not optimal, at least in the class of random matrices. Indeed, we believe that the application of the deterministic Schur test in the proof does not yield the best possible constant for random coefficients, but since we emphasize the universal deterministic results, this step cannot be further improved.

\begin{figure}
\centering
\includegraphics[width=0.95\textwidth]{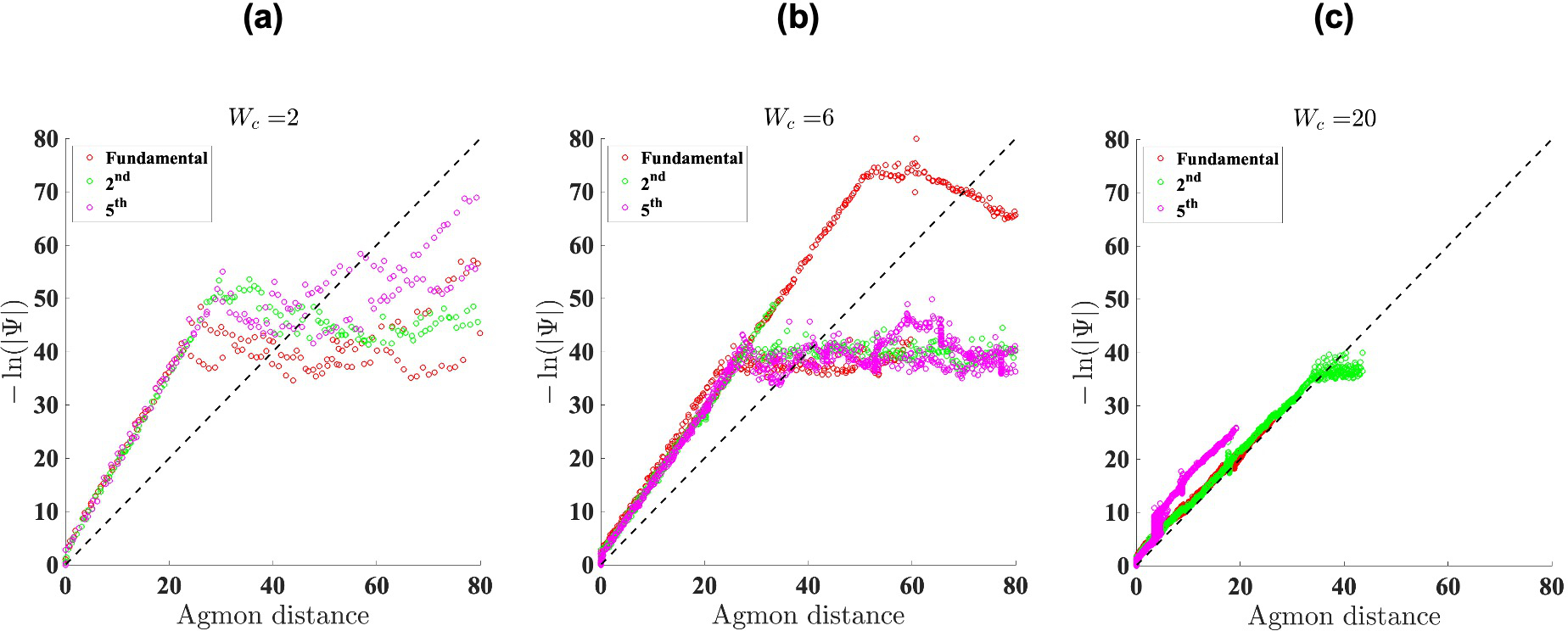}
\caption{Scatter plots of the logarithm of the absolute value of several eigenvectors against the corresponding Agmon distance, for 3~different values of the connectivity~$\Conn=2$, 6, and 20 (frames (a), (b), and (c)). For each eigenvector (eigenvectors \#1, 2, and 5 in each frame), we display $-\ln |\psi_i|$ at any given point~$i$  vs. the Agmon distance between the point~$i$ and the location where $|\psi|$ is maximal. The plots exhibit a strong linear relationship between these two quantities, down to values of $|\psi_i|$ around $e^{-40}$ (of the order of $10^{-18}$), which is a signature of the exponential decay. The slope seems to depend only on $W_c$.}
\label{fig:Agmon}
\end{figure}

Finally, Figure~\ref{fig:wells} shows that Hypothesis~\ref{separation} is actually fulfilled in some realistic situations. The top frame displays the example already presented in Figure~\ref{fig:Wc_6}. Superimposed to the eigenvectors, the set $K_E$ introduced in Definition~\ref{distance-def} is also drawn (grey rectangles) for $E=0.7$ (horizontal dashed red line). The middle frame displays the plot of the Agmon distance to $K_E$. Thresholding this plot at $S=2$ (green horizontal line) allows us to draw the $S$-neighborhood of $K_E$ (the orange rectangles). The bottom frame shows a possible partition of the entire domain into five subdomains ($\Omega_1, \cdots, \Omega_5$), each subdomain containing at least one well of the effective potential $1/u$. The distances $\rho(\partial^- \Omega_\ell, K_\ell)$ defined in Hypothesis~\ref{separation} are here respectively 48.1584, 2.8093, 4.2169, 3.6784, 9.3756. They all are larger than $S$, thus satisfying  Hypothesis~\ref{separation} .

\begin{figure}
\centering
\includegraphics[width=0.9\textwidth]{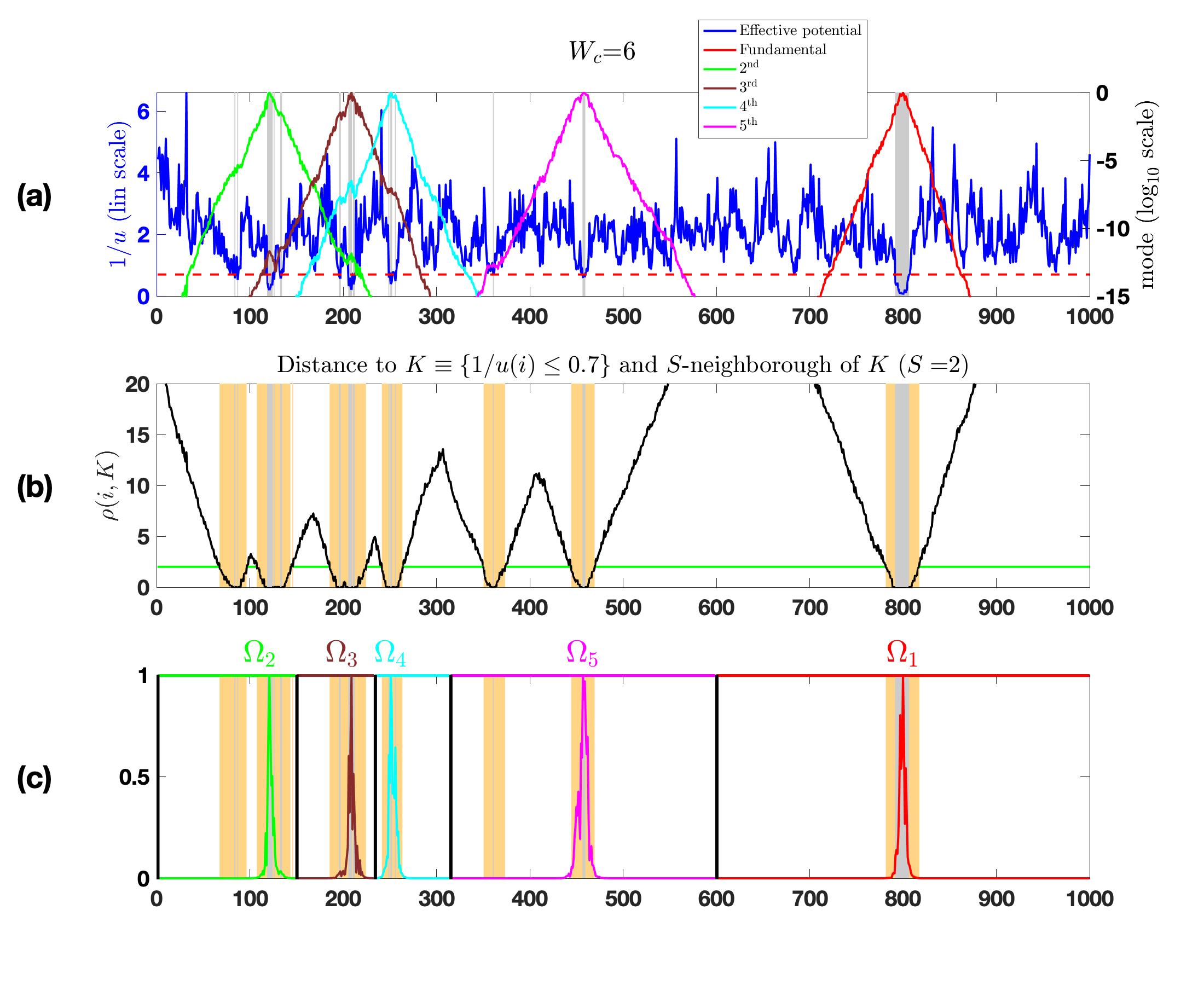}
\caption{(a) Eigenvectors in log scale, superimposed with the set $K_E$ defined in \ref{distance-def} (grey rectangles) for the value $E=0.7$ (indicated by the red dashed line). (b) Plot of the Agmon distance of each point to the set~$K_E$. The orange rectangles correspond to the $S$-neighborhood of $K_E$ for $S=2$ (indicated by the green horizontal line). (c) Partition of the domain in five subdomains. All distances $\rho(\partial^- \Omega_\ell, K_\ell)$ defined in~\ref{separation} are larger than $S$. This partition thus fulfills Hypothesis~\ref{separation}.}
\label{fig:wells}
\end{figure}

To be more precise, let us turn to the exact statements.

\clearpage

\section{The proof of the main results}

In this section we prove Theorem \ref{t1.1.1}.  We will use a double commutator method.  Let $[A,B] \coloneqq AB-BA$ denote the usual commutator of $N \times N$ matrices, and $\langle, \rangle$ the usual inner product on $\ell^2([N])$.  We observe the general identity
\begin{equation}\label{direct}
 \langle [[A,D],D] u, u \rangle = \sum_{i,j \in [N]: i \neq j} a_{ij} u_i u_j (d_{ii} - d_{jj})^2.
\end{equation}
whenever $A = (a_{ij})_{i,j \in [N]}$ is a matrix, $D = \mathrm{diag}(d_{11},\dots,d_{nn})$ is a diagonal matrix, and $u = (u_i)_{i \in [N]}$ is a vector.  In particular we have 
\begin{equation}\label{neg}
\langle [[A,D],D] u, u \rangle \leq 0
\end{equation}
whenever $A$ is a $Z$-matrix and the entries of $u$ have constant sign.  It will be this negative definiteness property that is key to our arguments.  One can compare \eqref{direct}, \eqref{neg} to the Schr\"odinger operator identity
$$
\langle [[-\Delta + V, g],g] u, u \rangle = - 2 \int_{\R^d} |\nabla g|^2 |u|^2 \leq 0$$
for any (sufficiently well-behaved) functions $V, g, u: \R^d \to \R$.

To exploit \eqref{direct} we will use the following identity.

\begin{lemma}[Double commutator identity]\label{dc} Let $A, \Psi, G$ be $N \times N$ real symmetric matrices such that $\Psi G = G \Psi$, and suppose that $u$ is an $N \times 1$ vector.  Then
$$ \langle G [\Psi,A] u, G \Psi u \rangle = \frac{1}{2} \langle [[A,G\Psi],G\Psi] u, u \rangle - \frac{1}{2} \langle [[A,G],G] \Psi u, \Psi u \rangle.$$
\end{lemma}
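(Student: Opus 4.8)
The plan is to verify the identity by direct algebraic expansion of both sides, exploiting the hypothesis $\Psi G = G\Psi$ and the symmetry of $A, \Psi, G$. First I would expand the right-hand side. Writing $B \coloneqq G\Psi = \Psi G$, the first term on the right is $\tfrac12\langle [[A,B],B]u,u\rangle = \tfrac12\langle (AB^2 - 2BAB + B^2A)u,u\rangle$. Using that $A$ and $B$ are symmetric (note $B^\top = \Psi^\top G^\top = \Psi G = B$), this equals $\tfrac12\langle ABu, Bu\rangle + \tfrac12\langle Bu, ABu\rangle - \langle BAu, Bu\rangle$ after moving one factor of $B$ across each inner product; since everything is real this is $\langle ABu,Bu\rangle - \langle BAu,Bu\rangle = \langle [A,B]u, Bu\rangle$. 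Wait — more carefully, $\langle AB^2 u,u\rangle = \langle B^2 u, Au\rangle = \langle Bu, BAu\rangle$, $\langle B^2 A u, u\rangle = \langle Au, B^2 u\rangle = \langle BAu, Bu\rangle$, and $\langle BABu,u\rangle = \langle ABu, Bu\rangle$, so the first right-hand term is $\tfrac12(\langle Bu,BAu\rangle + \langle BAu,Bu\rangle) - \langle ABu,Bu\rangle = \langle BAu,Bu\rangle - \langle ABu,Bu\rangle = -\langle[A,B]u,Bu\rangle = \langle[B,A]u,Bu\rangle$. Hmm, I will need to track the sign carefully so that it matches $\langle G[\Psi,A]u,G\Psi u\rangle$ on the left.

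Next I would expand the second term on the right, $-\tfrac12\langle[[A,G],G]\Psi u,\Psi u\rangle$. Set $w \coloneqq \Psi u$. By the same manipulation as above (with $G$ in place of $B$ and $w$ in place of $u$), $\langle[[A,G],G]w,w\rangle = 2\langle GAw, Gw\rangle - 2\langle AGw,Gw\rangle = 2\langle[G,A]w,Gw\rangle$ — again modulo a sign I will pin down in the writeup. So the second term is $-\langle[G,A]\Psi u, G\Psi u\rangle = \langle[A,G]\Psi u, G\Psi u\rangle$. Now I would add the two contributions and use $\Psi G = G\Psi$ repeatedly to rewrite $Bu = G\Psi u$, $[A,B] = [A,G\Psi]$, etc., to collapse everything to $\langle G[\Psi,A]u, G\Psi u\rangle$. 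The key combinatorial point is that the $G$-commutator term and the cross terms coming from $[A,G\Psi] = [A,G]\Psi + G[A,\Psi]$ must cancel, leaving only the $G[\Psi,A]$ piece; this is where the commutativity $\Psi G = G\Psi$ does the real work, since it lets me pull the outer $G$ out front and identify $[A,G\Psi] - [A,G]\Psi$-type expressions.

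The main obstacle is purely bookkeeping: keeping the order of non-commuting factors straight (only $\Psi$ and $G$ commute — neither commutes with $A$) and making sure each transfer of a matrix across $\langle\cdot,\cdot\rangle$ uses symmetry correctly, so that no spurious terms survive. A clean way to organize this, which I would adopt, is to expand $[A,G\Psi] = G[A,\Psi] + [A,G]\Psi$ first, then write $\langle G[\Psi,A]u, G\Psi u\rangle = -\langle G[A,\Psi]u, G\Psi u\rangle$ and show the right-hand side of the lemma equals exactly this by substituting the two expanded double commutators and canceling the $[A,G]$ terms against each other. No step is deep; the proof is a few lines of symmetric-matrix algebra once the substitution $B = G\Psi = \Psi G$ is in place.
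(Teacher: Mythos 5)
Your proposal is correct and follows essentially the same route as the paper: expand both double commutators using the symmetry of $G$ and of $B=G\Psi$ (which is symmetric precisely because $\Psi G=G\Psi$), and observe that the cross terms cancel. The only cosmetic difference is that you organize the cancellation via the Leibniz rule $[A,G\Psi]=[A,G]\Psi+G[A,\Psi]$, whereas the paper cancels the common term $\langle AG\Psi u, G\Psi u\rangle$ directly; the signs you left to pin down do work out as you indicated.
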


\begin{proof} By the symmetric nature of $G$ we have
$$ \langle [[A,G],G] \Psi u, \Psi u \rangle = 2\langle GA\Psi u, G\Psi u\rangle - 2 \langle A G \Psi u, G \Psi u \rangle$$
and similarly from the symmetric nature of $G \Psi$ we have
$$ \langle [[A,G\Psi],G\Psi] u, u \rangle = 2 \langle G \Psi A u, G \Psi u \rangle - 2 \langle A G \Psi u, G \Psi u \rangle.$$
The claim follows.
\end{proof}

We can now conclude

\begin{corollary}\label{dc-cor} Let $A = (a_{ij})_{i,j \in [N]}$ be a $N \times N$ real symmetric $Z$-matrix. Assume that $D$ is some subset of $[N]$ and that $\vp$ is a local eigenvector of $A$ corresponding to the eigenvalue $E$ on $D^c=[N]\setminus D$. Let $u = (u_i)_{i \in [N]}$ be a vector with all positive entries, and let $G = \mathrm{diag}(G_{11},\dots,G_{NN})$ be a real diagonal matrix.  Then
\begin{equation}\label{dc-eq}
 \sum_{k \in [N]} \varphi_k^2 G_{kk}^2 \left( \frac{(Au)_k}{u_k} - E \right) \leq -\frac{1}{2} \sum_{i,j \in [N]: i \neq j} a_{ij} \varphi_i \varphi_j (G_{ii} - G_{jj})^2.
\end{equation}
\end{corollary}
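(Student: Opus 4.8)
The plan is to apply Lemma~\ref{dc} with the diagonal matrix $\Psi$ chosen so that $\Psi u = \vp$. Since $u$ has strictly positive entries, we may set $\Psi \coloneqq \mathrm{diag}(\vp_1/u_1,\dots,\vp_N/u_N)$, which is well-defined and satisfies $\Psi u = \vp$. Both $\Psi$ and $G$ are diagonal, hence they commute and are symmetric, so Lemma~\ref{dc} (applied with this $\Psi$ and the given $G$, $A$) yields
$$\langle G[\Psi,A]u, G\Psi u\rangle = \tfrac{1}{2}\langle [[A,G\Psi],G\Psi]u,u\rangle - \tfrac{1}{2}\langle [[A,G],G]\Psi u,\Psi u\rangle.$$
I would then evaluate all three terms separately.

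For the left-hand side: expand $[\Psi,A]u = \Psi Au - A\vp$ and use $G\Psi u = G\vp$ to write it as $\sum_k G_{kk}^2 \vp_k\big(\tfrac{\vp_k (Au)_k}{u_k} - (A\vp)_k\big)$. Here is where the local eigenvector hypothesis enters: $\vp_k = 0$ for $k \in D$, while $(A\vp)_k = E\vp_k$ for $k \in D^c$ (this is exactly $I_{D^c}AI_{D^c}\vp = E\vp$ combined with $I_{D^c}\vp = \vp$). Hence $\sum_k G_{kk}^2 \vp_k (A\vp)_k = E\sum_k G_{kk}^2 \vp_k^2$, and the left-hand side collapses to $\sum_k G_{kk}^2\vp_k^2\big(\tfrac{(Au)_k}{u_k} - E\big)$, which is precisely the left side of \eqref{dc-eq}.

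For the right-hand side: the second term is computed by the identity \eqref{direct} with the vector $\Psi u = \vp$ and the diagonal matrix $G$, giving $-\tfrac{1}{2}\sum_{i\neq j} a_{ij}\vp_i\vp_j(G_{ii}-G_{jj})^2$, which is exactly the right side of \eqref{dc-eq}. The first term is again evaluated via \eqref{direct}, now with $u$ and the diagonal matrix $G\Psi$ (whose $i$-th entry is $G_{ii}\vp_i/u_i$): it equals $\tfrac{1}{2}\sum_{i\neq j} a_{ij} u_i u_j (G_{ii}\vp_i/u_i - G_{jj}\vp_j/u_j)^2$, and this is $\leq 0$ because $a_{ij}\leq 0$ for $i\neq j$ ($Z$-matrix), while $u_iu_j>0$ and the remaining factor is a square. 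Discarding this non-positive term from the identity leaves the desired inequality.

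The whole argument is a single substitution once the correct $\Psi$ is identified, and the only genuine inequality used is the sign of a manifestly non-positive sum, so I do not anticipate any serious obstacle. The one point deserving care is the bookkeeping in the left-hand side computation: one must check that indices in $D$ do not interfere, which they do not, since each such term carries a vanishing factor $\vp_k$.
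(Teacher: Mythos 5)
Your proposal is correct and follows essentially the same route as the paper: the same choice $\Psi = \mathrm{diag}(\varphi_k/u_k)$ in Lemma~\ref{dc}, the same use of the $Z$-matrix sign (via \eqref{direct}/\eqref{neg}) to discard the $\langle [[A,G\Psi],G\Psi]u,u\rangle$ term, and the same treatment of the local eigenvector condition on $D^c$. The only difference is cosmetic: the paper inserts $-EI$ into the commutator before applying the lemma, whereas you account for the eigenvalue $E$ afterwards in the left-hand side computation.
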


\begin{proof}  Writing $[\Psi,A] = \Psi(A-EI) - (A-EI)\Psi$, we apply Lemma \ref{dc} with $\Psi \coloneqq \mathrm{diag}( \varphi_1/u_1,\dots,\varphi_N/u_n )$, to get
\begin{multline*}
 \langle G \Psi (A-E I) u, G \Psi u \rangle- \langle G (A-E I) \Psi  u, G \Psi u \rangle 
 \\= \frac{1}{2} \langle [[A,G\Psi],G\Psi] u, u \rangle - \frac{1}{2} \langle [[A,G],G] \Psi u, \Psi u \rangle.
\end{multline*}
By \eqref{neg} the first term on the right-hand side above is non-positive and hence, the entire expression is less than or equal to 
$$
 - \frac{1}{2} \langle [[A,G],G] \Psi u, \Psi u \rangle
\leq -\frac{1}{2} \sum_{i,j \in [N]: i \neq j} a_{ij} \varphi_i \varphi_j (G_{ii} - G_{jj})^2.
$$
The latter inequality follows from \eqref{direct} and the fact that by definition $\Psi u = \varphi$.

Since $G$ is diagonal, and $\vp=\Psi u$ is a local eigenvector on $D^c$, the second term on the left-hand side is equal to zero. Indeed, $(\Psi u)_k=(G\Psi u)_k=0$ for $k\in D$, and hence 
$$\langle G (A-E I) \Psi  u, G \Psi u \rangle=\langle G (A-E I)|_{D^c} \Psi  u, G \Psi u \rangle=0.$$
Writing
$$ \Psi (A-E I) u = \left( \left( \frac{(Au)_k}{u_k} - E \right) \varphi_k \right)_{k \in [N]}$$
the claim follows.
\end{proof}

The strategy is then to apply this corollary with a sufficiently slowly varying function $G$, so that one can hope to mostly control the right-hand side of \eqref{dc-eq} by the left-hand side.

\noindent {\it Proof of Theorem~\ref{t1.1.1}.} We abbreviate $K_{\overline{E}}$ as $K$ for simplicity. We can of course assume that $\varphi$ is not identically zero.  If $K \setminus D$ was empty we could apply Corollary \ref{dc-cor} with $G_{kk}=1$ to obtain a contradiction, so we may assume without loss of generality that $K \setminus D$ is non-empty.  We apply Corollary \ref{dc-cor} with
$$ G_{ii} \coloneqq 1_{i \not \in K\setminus D} \,e^{\alpha \rho_{A,u, \overline E}(i,K\setminus D)},$$
where the indicator $1_{i \not \in K\setminus D}$ is equal to zero for $i \in K\setminus D$ and equal to $1$ otherwise.  By construction, $G_{kk}$ vanishes for  $k\in K\setminus D$, and $\vp$ vanishes on $D$, so that $G_{kk}\vp_k$ vanishes on $K$. Thus by \eqref{vi-def}
\begin{equation}\label{klo}
\begin{split}
\left(\overline E-E \right) &\sum_{k\notin K} \varphi_k^2 e^{2\alpha \rho(i,K)}+ \sum_{k\notin K} \varphi_k^2 e^{2\alpha \rho(i,K)} v_k \\
 &= \sum_{k\notin K} \varphi_k^2 G_{kk}^2 \left(\overline E-E \right) + \sum_{k\notin K} \varphi_k^2 G_{kk}^2 \left( \frac{(Au)_k}{u_k} - \overline E \right)_+ 
\\
 &= \sum_{k\notin K} \varphi_k^2 G_{kk}^2 \left(\overline E-E \right) + \sum_{k\notin K} \varphi_k^2 G_{kk}^2 \left( \frac{(Au)_k}{u_k} - \overline E \right)
\\
&=  \sum_{k \in [N]} \varphi_k^2 G_{kk}^2 \left( \frac{(Au)_k}{u_k} - E \right) \\
&\leq -\frac{1}{2} \sum_{i,j \in [N]: a_{ij} \neq 0; i \neq j} a_{ij} \varphi_i \varphi_j (G_{ii} - G_{jj})^2.
\end{split}
\end{equation}
Now we need to estimate the quantity $G_{ii} - G_{jj}$ whenever $a_{ij} \neq 0$.  We first observe from the triangle inequality and \eqref{rhoj} that 
\begin{align*}
|e^{\alpha \rho(i,K\setminus D)} - e^{\alpha \rho(j,K\setminus D)}| &\leq e^{\alpha \rho(i,K\setminus D)} \left(e^{\alpha \rho(i,j)} - 1\right) \\
&\leq e^{\alpha \rho(i,K\setminus D)} \left( \left( 1 + \sqrt{\frac{\sqrt{v_i v_j}}{|a_{ij}|}} \right)^\alpha - 1 \right) \\
&\leq e^{\alpha \rho(i,K\setminus D)} \alpha \sqrt{\frac{\sqrt{v_i v_j}}{|a_{ij}|}} 
\end{align*}
and similarly with $i$ and $j$ reversed; in particular
$$
|e^{\alpha \rho(i,K\setminus D)} - e^{\alpha \rho(j,K\setminus D)}|^2 \leq \alpha^2 e^{\alpha \rho(i,K\setminus D)} e^{\alpha \rho(j,K\setminus D)} \frac{\sqrt{v_i v_j}}{|a_{ij}|}.
$$
Thus we have
\begin{equation}\label{g-bound}
(G_{ii} - G_{jj})^2 \leq \alpha^2 e^{\alpha \rho(i,K\setminus D)} e^{\alpha \rho(j,K\setminus D)} \frac{\sqrt{v_i v_j}}{|a_{ij}|}
\end{equation}
when $i,j \not \in K\setminus D$.

Next, suppose that $i \not \in K\setminus D$, $j \in K\setminus D$.  Then $G_{jj}=0$, and from \eqref{rhoj} we have
\begin{align*}
\rho(i,K\setminus D) &\leq \rho(i,j) \\
&\leq \ln \left(1+ \sqrt{\frac{\sqrt{v_i v_j}}{|a_{ij}|}} \right) \\
&= 0
\end{align*}
since $v_j=0$.  We conclude that $(G_{ii}-G_{jj})^2=1$ in this case.  Similarly if $i \in K\setminus D$ and $j \not \in K\setminus D$.  Finally, if $i,j \in K\setminus D$ then $G_{ii}=G_{jj}=0$, so that $(G_{ii}-G_{jj})^2=0$ in this case.  Applying all of these estimates, we can bound the right-hand side of \eqref{klo} by
\begin{equation}\label{new-rhs}
\begin{split}
& \frac{\alpha^2}{2} \sum_{i,j \not \in K\setminus D: i \neq j; \,\,a_{ij} \neq 0} |\varphi_i| |\varphi_j| e^{\alpha \rho(i,K\setminus D)} e^{\alpha \rho(j,K\setminus D)} \sqrt{v_i v_j} \\
&\quad + \frac{1}{2} \sum_{i \in K\setminus D; j \not \in K\setminus D \hbox{ or } i \not \in K\setminus D, j \in K\setminus D; \,\,a_{ij} \neq 0} |a_{ij}| |\varphi_i| |\varphi_j|.
\end{split}
\end{equation}
Since $A$ has at most $\Conn$ non-zero non-diagonal entries in each row and column, we see from Schur's test (or the Young inequality $ab \leq \frac{1}{2} a^2 + \frac{1}{2} b^2$) that
$$ 
\sum_{i,j \not \in K\setminus D: i \neq j; \,\,a_{ij} \neq 0} |\varphi_i| |\varphi_j| e^{\alpha \rho(i,K\setminus D)} e^{\alpha \rho(j,K\setminus D)} \frac{\sqrt{v_i v_j}}{|a_{ij}|}
\leq \Conn \sum_{i \not \in K\setminus D} |\varphi_i|^2 e^{2\alpha \rho(i,K\setminus D)} v_i$$
and
$$
\sum_{i \in K\setminus D; j \not \in K\setminus D \hbox{ or } i \not \in K\setminus D, j \in K\setminus D; a_{ij} \neq 0} |a_{ij}| |\varphi_i| |\varphi_j|
\leq \Conn (\sup_{i \in K\setminus D; j \not \in K\setminus D} |a_{ij}|) \sum_{i \in [N]} |\varphi_i|^2.$$
 Combining all of the above considerations, we arrive at the conclusion of the theorem.
 \ep
 
\section{Diagonalization}
\label{resonant}

Let the notation and hypotheses be as in Theorem \ref{t1.1.1}.  We abbreviate $\rho = \rho_{A,u,\overline{E}}$ and $K = K_{\overline{E}}$.  To illustrate the decoupling phenomenon we place the following hypothesis on the potential well set $K$:

\begin{hypothesis}[Separation hypothesis]\label{separation}  There exists a parameter $S>0$, a partition $K=\bigcup_\ell K_\ell$ of $K$ into disjoint ``wells'' $K_\ell$, and ``neighborhoods'' $\Omega_\ell \supset K_\ell$ of each well $K_\ell$ obeying the following axioms:
\begin{itemize}
\item[(i)]  The neighborhoods $\Omega_\ell$ are all disjoint.
\item[(ii)]  The neighborhoods $\Omega_\ell$ contain the $S$-neighborhood of $K_\ell$, thus $\rho( \Omega_\ell^c, K_\ell ) \geq S$.
\item[(iii)]  For any $\ell$, we have $\rho(\partial^- \Omega_\ell, K_\ell) \geq S$, where the inner boundary $\partial^- \Omega_\ell$ is defined as the set of all $k \in \Omega_\ell$ such that $a_{kj} \neq 0$ for some $j \notin \Omega_\ell$.
\end{itemize}
\end{hypothesis}

We remark that axioms (i), (ii), (iii) imply that the full boundary $\partial \Omega_\ell$, defined as the union of the inner boundary $\partial^- \Omega_\ell$ and the outer boundary $\partial^+ \Omega_\ell$ consisting of those $j \notin \Omega_\ell$ such that $a_{kj} \neq 0$ for some $k \in \Omega_\ell$, stays at a distance at $S$ from $K$, since every element of an outer boundary $\partial^+ \Omega_\ell$ either lies in the inner boundary of another $\Omega_{\ell'}$, or else lies outside of all of the $\Omega_{\ell'}$. We also remark that axiom (iii) is a strengthening of axiom (ii), since if there was an element $k$ in $\Omega_\ell^c$ at distance less than $S$ from $K_\ell$ then by taking a geodesic path from $K_\ell$ to $k$ one would eventually encounter a counterexample to (iii), but we choose to retain explicit mention of axiom (ii) to facilitate the discussion below.

Informally, to obey Hypothesis \ref{separation}, one should first partition $K$ into ``connected components'' $K_\ell$, concatenating two such components together if their separation $\rho$ is too small, so that the separation $\overline{S} \coloneqq \inf_{\ell \neq \ell'} \rho(K_\ell,K_{\ell'})$ is large, and then perform a Voronoi-type partition in which $\Omega_\ell$ consists of those $k \in [N]$ which lie closer to $K_\ell$ in the $\rho$ metric than any other $K_{\ell'}$.  The axioms (i), (ii) would then be satisfied for any $S < \overline{S}/2$ thanks to the triangle inequality, and when $\overline{S}$ is large one would expect axiom (iii) to also be obeyed if we reduce $S$ slightly.  It seems plausible that one could weaken the axiom (iii) and still obtain decoupling results comparable to those presented here, but in this paper we retain this (relatively strong) axiom in order to illustrate the main ideas. 

We have already demonstrated in Section~\ref{sNum} non-vacuousness of Hypothesis \ref{separation}, at least in typical numerical examples. Let  us say a few more words in this direction. Recall the simulations in Section~\ref{sNum}. Much as there, let us assume for the moment that we are working with a band matrix and $W$  is the band width, that is, $a_{ij}=0$ whenever $|i-j|>W$. One can deduce a rather trivial lower bound for the Agmon distance associated to $v$ as in Definition~\ref{distance-def}. If $v_i\geq v_{min}$ for all  $i$ in an interval $I=[i_1, i_{q}]\cup \NN$ of length $q$, then  the Agmon distance between two components of the complement of $I$ 
$$\rho(i_1-1, i_q+1)\geq \Bigl\lfloor\frac{i_q-i_1+1-W}{W}\Bigr\rfloor  \ln  \left(1+\sqrt{\frac{v_{min}}{\max_{i,j\in [N]} a_{ij}}}\right).$$
Here, the lower bracket as usual stands for the floor function. The above inequality follows directly from the observation that the number of non-trivial components such that $v_{i_\ell}\neq 0$ and $v_{i_{\ell+1}}\neq 0$ and $a_{i_\ell i_{\ell+1}}\neq 0$ of the path  from $i_1-1$ to $i_q+1$ 
is at least  $\Bigl\lfloor\frac{i_q-i_1+1-W}{W}\Bigr\rfloor$. Going back to our definitions, and fixing some $\overline E>E$ and the respective partition of $K_{\overline E}=\cup_\ell K_\ell$ into disjoint components, we denote by $d$ the minimal ``Euclidean" distance between the components, i.e., 
$$d:=\min_\ell \min_{i\in K_\ell, \, j\in K_{\ell+1}}|i-j|. $$
It is in our interest to make this distance (or rather  the corresponding Agmon distance) substantial, so we might combine several disjoint components into one $K_\ell$. With this at hand, we choose $\Omega_\ell$ to be maximal possible neighborhoods of $K_\ell$ which are still disjoint. Since the inner boundary $\partial^- \Omega_\ell$ consists of $i\in  \Omega_\ell$ such that $j\not\in \Omega_\ell$ and $a_{ij}\neq 0$, that is, has ``width" at most $W$, one can see that with the aforementioned choices the ``Euclidean" distance between $K_\ell$ and $\partial^-\Omega_\ell$ 
$$\min_{i\in K_\ell, \, j\in \partial^-\Omega_\ell}\geq \frac d2-W,$$
or, to be more precise,  $\Bigl\lfloor\frac{d+1}{2}\Bigr\rfloor-W.$ By design, the  complement of $K_\ell$ in $\Omega_\ell$  consists of  points such that $v_i>\bar E$, that is, there is $v_{min}>0$ such that $v>v_{min}$ in  $\Omega_\ell\setminus (\partial^-\Omega_\ell\cup K_\ell)$. Hence, at the very  least, for this $v_{min}>0$ we have 
$$\rho(\partial^-\Omega_\ell, K_\ell)\geq \Bigl\lfloor\frac{d/2-2W-1}{W}\Bigr\rfloor  \ln  \left(1+\sqrt{\frac{v_{min}}{\max_{i,j\in [N]} a_{ij}}}\right).$$
Clearly, we could take a smaller subinterval of $\Omega_\ell\setminus (\partial^-\Omega_\ell\cup K_\ell)$ and make  $v_{min}>0$ larger, not to mention that this is a trivial lower estimate which does not  take into account  high values of $v$. In any case, this demonstrates that Hypothesis \ref{separation} is non-vacuous.

Let $\psi^j$ denote the complete system of orthonormal eigenvectors of $A$ on $[N]$ with eigenvalues $\lambda_j$. Let $\Psi_{(a,b)}$ denote the orthogonal projection in $\ell^2([N])$ onto the span of eigenvectors $\psi^j$ with eigenvalue $\lambda_j \in (a,b)$. For a fixed $\ell$ let $\vp^{\ell,j}$ denote a complete orthonormal system of the local eigenvectors of $A$ on $\Omega_\ell$ with eigenvalues $\mu_{\ell,j}$, and let $\Phi_{(a,b)}$ be the orthogonal projection onto the span of the eigenvectors $\vp^{\ell,j}$ with eigenvalue $\mu_{\ell,j} \in (a,b)$, over all $\ell$ and $j$.  

The goal of this section is to prove that under the assumption of Hypothesis \ref{separation}, $A$ can be almost decoupled according to  $\bigcup_\ell \Omega_\ell$, with the coupling exponentially small in $S$.   More precisely, we have the following result, which is an analogue of \cite[Theorem 5.1]{ADFJM-math} in the $M$-matrix setting.

\begin{theorem}[Decoupling theorem]\label{tDiag} Assume Hypothesis \ref{separation}. Fix $\delta>0$ and let $\vp$ be one of the local eigenvectors $\vp^{\ell,j}$ with eigenvalue $\mu = \mu_{\ell,j}$ and $\mu \le \overline E - \delta$. Then 
\begin{equation}\label{diag1}
\| \vp - \Psi_{(\mu-\delta, \mu + \delta)}\vp \|^2 \le  \frac{\Conn^2}{\delta^3}\, \max_{i,j \in [N]} |a_{i,j}|^{3}\, e^{-\frac{2S}{\sqrt \Conn}}\|\vp\|^2\,.
\end{equation}
Conversely, if $\psi$ is one of the global eigenvectors $\psi^{j}$ with eigenvalue $\lambda  = \lambda_j\le \overline E - \delta$, then 
\begin{equation}\label{diag2}
\| \psi - \Phi_{(\lambda-\delta, \lambda+ \delta)}\psi \|^2
 \le  \frac{\Conn^2}{\delta^3}\, \max_{i,j \in [N]} |a_{i,j}|^{3}\, e^{-\frac{2S}{\sqrt \Conn}}\|\psi\|^2\,.
\end{equation}
\end{theorem}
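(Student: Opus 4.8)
The plan is to follow the standard ``commutator + eigenvector perturbation'' scheme used in \cite[Theorem 5.1]{ADFJM-math}, with the exponential-decay input supplied entirely by Theorem \ref{t1.1.1}. I will prove the two inequalities by essentially the same argument, so let me describe \eqref{diag1}; \eqref{diag2} is the mirror image with the roles of local and global eigenvectors swapped. Start from the local eigenvector $\vp = \vp^{\ell,j}$ on $\Omega_\ell$ with eigenvalue $\mu \le \overline E - \delta$. The quantity to control is $\|(I-\Psi_{(\mu-\delta,\mu+\delta)})\vp\|^2$; by the spectral theorem for the symmetric matrix $A$ this equals $\sum_{j: |\lambda_j - \mu| \ge \delta} |\langle \vp, \psi^j\rangle|^2$, and since $|\lambda_j - \mu| \ge \delta$ on this range, it is bounded by $\delta^{-2} \sum_j (\lambda_j - \mu)^2 |\langle \vp,\psi^j\rangle|^2 = \delta^{-2}\|(A - \mu I)\vp\|^2$. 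So the whole theorem reduces to the residual estimate
$$ \|(A-\mu I)\vp\|^2 \;\lesssim\; \frac{\Conn^2}{\delta}\,\max_{i,j}|a_{ij}|^3\, e^{-2S/\sqrt{\Conn}}\,\|\vp\|^2. $$

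The key point is that $\vp$ is an \emph{exact} eigenvector of $A|_{\Omega_\ell} = I_{\Omega_\ell} A I_{\Omega_\ell}$, so $(A - \mu I)\vp = (A - I_{\Omega_\ell}AI_{\Omega_\ell})\vp$, and because $\vp$ is supported on $\Omega_\ell$ and $I_{\Omega_\ell}\vp = \vp$, this residual vector is supported on $\partial^+\Omega_\ell \cup \partial^-\Omega_\ell$ and its entries are of the form $\sum_{j} a_{kj}\vp_j$ with at least one of $k,j$ crossing $\partial\Omega_\ell$. Concretely, $\|(A-\mu I)\vp\|^2 = \sum_{k} \big|\sum_{j:\, k\in \Omega_\ell^c \text{ or } j \in \Omega_\ell^c} a_{kj}\vp_j\big|^2$; using connectivity at most $\Conn$ and Cauchy--Schwarz on the inner sum, this is at most $\Conn \max_{i,j}|a_{ij}|^2$ times $\sum_{j \in \partial^-\Omega_\ell} |\vp_j|^2$ (plus a symmetric term which is handled identically, picking up the other factor of $\Conn$). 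Now I invoke Theorem \ref{t1.1.1}: apply it with $D = \Omega_\ell^c$, energy threshold $\overline E$, and $\alpha = 1/\sqrt{\Conn}$, noting that $\vp$ is a local eigenvector of $A$ on $D^c = \Omega_\ell$ with eigenvalue $\mu \le \overline E$, and that $K_{\overline E}\setminus D = K_{\overline E}\cap\Omega_\ell$ contains $K_\ell$. The first term on the left of \eqref{eq1.1.2} gives $(\overline E - \mu)\sum_{k\notin K_{\overline E}} |\vp_k|^2 e^{2\rho(k,K_{\overline E}\setminus D)/\sqrt\Conn} \le \tfrac{\Conn}{2}\|\vp\|^2 \max|a_{ij}|$, and since $\overline E - \mu \ge \delta$ and every $j \in \partial^-\Omega_\ell$ satisfies $\rho(j, K_\ell) \ge S$ hence $\rho(j, K_{\overline E}\setminus D) \ge S$ by axiom (iii) of Hypothesis \ref{separation} (points of $\partial^-\Omega_\ell$ are not in $K_{\overline E}$ since they lie at positive Agmon distance from $K_{\overline E}$), we obtain $\sum_{j\in\partial^-\Omega_\ell}|\vp_j|^2 \le e^{-2S/\sqrt\Conn}\cdot \tfrac{\Conn}{2\delta}\|\vp\|^2\max|a_{ij}|$. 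Combining this with the residual bound above yields $\|(A-\mu I)\vp\|^2 \lesssim \Conn^2 \delta^{-1}\max|a_{ij}|^3 e^{-2S/\sqrt\Conn}\|\vp\|^2$, and dividing by $\delta^2$ gives \eqref{diag1}.

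For the converse \eqref{diag2} I would run the same computation with $\psi = \psi^j$ a global eigenvector of eigenvalue $\lambda \le \overline E - \delta$, using the localization estimate to show that $\psi$ has exponentially small $\ell^2$-mass near the boundaries $\partial\Omega_\ell$ (so that $\psi$ is well-approximated by $\sum_\ell I_{\Omega_\ell}\psi$), then write $\|(A|_{\oplus_\ell \Omega_\ell} - \lambda I)(\sum_\ell I_{\Omega_\ell}\psi)\|$ as another boundary-supported residual, control it the same way, and convert the resulting bound into a statement about $\Phi_{(\lambda-\delta,\lambda+\delta)}\psi$ via the spectral theorem for the block-diagonal operator $\bigoplus_\ell A|_{\Omega_\ell}$. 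The main obstacle — and the only genuinely delicate bookkeeping — is making sure that the boundary terms are attributed to the right ``well'': one needs that every boundary point $j \in \partial^-\Omega_\ell$ (and also each outer-boundary point, via the remark following Hypothesis \ref{separation} that it is an inner-boundary point of some $\Omega_{\ell'}$ or lies outside all $\Omega_{\ell'}$) is at Agmon distance $\ge S$ from \emph{the specific} $K_\ell$ or $K_{\ell'}$ sitting inside the well whose eigenvector one is perturbing; axiom (iii) is exactly what guarantees this, and the factor $e^{-2S/\sqrt\Conn}$ comes out with the constant $1/\sqrt\Conn$ inherited from the choice $\alpha = 1/\sqrt\Conn$ in Theorem \ref{t1.1.1}. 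A secondary technical point is keeping the powers of $\delta$ and $\Conn$ and $\max|a_{ij}|$ matched to the statement, which is a routine matter of carrying the constants through the two applications of Cauchy--Schwarz and the one application of \eqref{eq1.1.2}.
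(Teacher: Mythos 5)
Your proposal is correct and follows essentially the same route as the paper's proof: reduce via the spectral theorem to the boundary-supported residual $(A-\mu I)\vp=(A-A|_{\Omega_\ell})\vp$ (resp.\ $\sum_\ell(A|_{\Omega_\ell}-\lambda I)\psi|_{\Omega_\ell}$ for the global case), and control the boundary $\ell^2$-mass by the first term of \eqref{eq1.1.2} applied with $D=\Omega_\ell^c$ (resp.\ $D=\emptyset$), using $\overline E-\mu\ge\delta$ and axiom (iii) of Hypothesis \ref{separation} to gain the factor $e^{-2S/\sqrt{\Conn}}$. The only discrepancy is constant bookkeeping: your Cauchy--Schwarz treatment of the coupling term yields a prefactor of order $\Conn^3/\delta^3$, whereas the paper's Schur-test bound $\|r\|\le \Conn^{1/2}\max_{i,j}|a_{ij}|\,\|\vp\|_{\ell^2(\partial\Omega_\ell)}$ is what produces the stated $\Conn^2/\delta^3$, so matching the exact constant in \eqref{diag1}--\eqref{diag2} requires tightening that one step rather than being purely routine.
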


\bp We mimic the arguments from Ref.~\citenum{ADFJM-math}. Let us consider the residual vector
$$ r \coloneqq A \vp - \mu \vp = (A - A|_{\Omega_{\ell}}) \vp.$$
Note that the expression $(A - A|_{\Omega_{\ell}}) \vp$ only depends on the values of $\vp$ in the boundary region $\partial \Omega_\ell$.  From Schur's test one thus has
$$\|r\| \leq \Conn^{1/2} \max_{i,j \in [N]} |a_{i,j}|\, \|\vp\|_{\ell^2(\partial\Omega_\ell)}.
$$
We apply Theorem~\ref{t1.1.1} with  $E=\mu$ and $D=\Omega_\ell^c$, so that $K\setminus D=K\cap \Omega_\ell=K_\ell$ and $\rho(k, K\setminus D)=\rho(k, K_\ell) \geq S$  for all $k\in \partial\Omega_\ell$ by Hypothesis \ref{separation}, which then yields
$$\|\vp\|_{\ell^2(\partial\Omega_\ell)}^2\leq e^{-2\alpha S}  \,  \frac{1}{\overline E-\mu} \,\frac \Conn 2 \max_{i,j \in [N]} |a_{i,j}|\, \|\vp\|^2.$$
Taking $\alpha \coloneqq \sqrt{2/\Conn}$ and recalling that $\overline E-\mu>\delta$, we have 
$$\|r\|^2 \leq  \frac{1}{\delta}\, \frac{\Conn^2}{2} \max_{i,j \in [N]} |a_{i,j}|^{3}\, e^{-\frac{2S}{\sqrt \Conn}}\|\vp\|^2. 
$$
From the spectral theorem one has
$$\|r\|^2 \geq \delta^2 \| \vp - \Psi_{(\mu-\delta, \mu + \delta)}\vp \|^2 $$
and the claim \eqref{diag1} follows.

The proof of \eqref{diag2} is somewhat analogous. Let us define the residual vector
$$\widetilde r \coloneqq \sum_\ell (A|_{\Omega_{\ell}} - \lambda I) \psi|_{\Omega_{\ell}} = \sum_\ell I_{\Omega_{\ell}} [A,I_{\Omega_{\ell}}] \psi$$
where the matrices $I_{\Omega_{\ell}}$ were defined in Definition \ref{local}. The values $(I_{\Omega_{\ell}} [A,I_{\Omega_{\ell}}])_{ik}$ are only non-zero when $a_{ik} \neq 0$ and $i, k \in \partial \Omega_\ell$.  In particular, by Hypothesis \ref{separation}, we have $\rho(k,K) \geq S$, and $\tilde r$ only depends on the values of $\psi$ outside of the $S$-neighborhood of $K$. Applying Theorem~\ref{t1.1.1} with $E=\lambda$ and $D=\emptyset$ and applying Schur's test as before, we conclude that
$$\|\widetilde r\|^2 \leq  \frac{1}{\delta}\, \frac{\Conn^2}{2} \max_{i,j \in [N]} |a_{i,j}|^{3}\, e^{-\frac{2S}{\sqrt \Conn}}\|\vp\|^2. 
$$
On the other hand, from the spectral theorem we have
\begin{align*}
\|\widetilde r\|^2 &= \sum_\ell \| (A|_{\Omega_\ell}-\lambda I) \psi_{\Omega_{\ell}} \|^2 \\
&\geq \delta^2 \| \psi - \Phi_{(\lambda-\delta, \lambda+ \delta)}\psi \|^2
\end{align*}
and the claim \eqref{diag2} follows. \ep

The theorem above assures that $A$ can be essentially decoupled on the union of $\Omega_\ell$'s in the sense that the eigenvectors of $A$ are exponentially close to the span of the eigenvectors of $A|_{\Omega_\ell}$, and vice versa. A direct corollary of this result is that the eigenvalues of $A$ are also exponentially close to the combined spectrum of $A|_{\Omega_\ell}$ over all $\ell$:

\begin{corollary} \label{C4.3} Assume Hypothesis \ref{separation}. Fix some $\delta>0$.
Consider the counting functions
\[
N(\lambda) \coloneqq \# \{ \lambda_j:\lambda_j\le \lambda\};
\quad
N_0(\mu) \coloneqq \# \{ \mu_{\ell,j} :  \mu_{\ell,j}  \le \mu\}.
\]
Assume that $\mu\le \overline E$ and choose a natural number $\bar N$ such that 
\[
 \frac{\Conn^2}{\delta^3}\, \max_{i,j \in [N]} |a_{i,j}|^{3}\, \bar N <e^{\frac{2S}{\sqrt \Conn}}. 
\]
Then 
\begin{equation}\label{counting}
\min(\bar N, N_0(\mu - \delta)) \le N(\mu) \quad \mbox{and}
\quad 
\min(\bar N, N(\mu-\delta)) \le N_0(\mu) .
\end{equation}
\end{corollary}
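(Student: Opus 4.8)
The plan is to deduce the corollary from the two decoupling estimates \eqref{diag1}, \eqref{diag2} of Theorem \ref{tDiag} by a soft dimension count. Set $\eta^2 \coloneqq \frac{\Conn^2}{\delta^3}\, \max_{i,j \in [N]} |a_{i,j}|^{3}\, e^{-2S/\sqrt{\Conn}}$, so that the hypothesis on $\bar N$ reads exactly $\bar N\,\eta^2 < 1$, and rewrite \eqref{diag1}, \eqref{diag2} as: a (unit norm) local eigenvector $\vp^{\ell,j}$ with $\mu_{\ell,j} \le \overline E - \delta$ satisfies $\| \vp^{\ell,j} - \Psi_{(\mu_{\ell,j}-\delta,\,\mu_{\ell,j}+\delta)} \vp^{\ell,j}\| \le \eta$, and symmetrically a (unit norm) global eigenvector $\psi^{j}$ with $\lambda_j \le \overline E - \delta$ satisfies $\| \psi^{j} - \Phi_{(\lambda_j-\delta,\,\lambda_j+\delta)} \psi^{j}\| \le \eta$. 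The only other ingredient is the elementary fact: if $w_1,\dots,w_m$ are orthonormal in $\ell^2([N])$, $V$ is a subspace with $\operatorname{dist}(w_k,V)\le\eta$ for all $k$, and $m\eta^2<1$, then $\dim V \ge m$. (Indeed, if $\dim V = r<m$, the images $P_V w_k$ under orthogonal projection onto $V$ are linearly dependent, so $\sum_k c_k P_V w_k = 0$ for some unit $(c_k)$; then $1 = \|\sum_k c_k w_k\| = \|\sum_k c_k(w_k - P_V w_k)\| \le \eta\sum_k|c_k| \le \eta\sqrt{m}$, contradicting $m\eta^2<1$.)

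For the first inequality in \eqref{counting}, I would put $m \coloneqq \min(\bar N, N_0(\mu-\delta))$, the case $m=0$ being trivial. Since the $\Omega_\ell$ are pairwise disjoint by Hypothesis \ref{separation}(i), the $\vp^{\ell,j}$ with different $\ell$ have disjoint supports while those with a common $\ell$ are orthonormal, so the whole family $\{\vp^{\ell,j}\}$ is orthonormal; pick $m$ of its members $w_1,\dots,w_m$ with eigenvalues $\nu_1,\dots,\nu_m\le\mu-\delta$. As $\mu\le\overline E$ we have $\nu_k\le\overline E-\delta$, so the rewritten \eqref{diag1} applies, and since $\nu_k+\delta\le\mu$ the range of $\Psi_{(\nu_k-\delta,\nu_k+\delta)}$ is spanned by global eigenvectors with eigenvalue $<\mu$, hence contained in $V \coloneqq \operatorname{span}\{\psi^j:\lambda_j\le\mu\}$, with $\dim V = N(\mu)$. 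Because $\Psi_{(\nu_k-\delta,\nu_k+\delta)}w_k\in V$, orthogonal projection onto $V$ does at least as well, giving $\operatorname{dist}(w_k,V)\le\eta$ for every $k$; the elementary fact then yields $N(\mu)=\dim V\ge m$.

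The second inequality follows by the mirror argument with the roles of $A$ and of the $A|_{\Omega_\ell}$ swapped: set $m'\coloneqq\min(\bar N, N(\mu-\delta))$, take orthonormal global eigenvectors with eigenvalues $\le\mu-\delta\le\overline E-\delta$, apply the rewritten \eqref{diag2}, note that each occurring window $\Phi_{(\lambda-\delta,\lambda+\delta)}$ has range inside $\operatorname{span}\{\vp^{\ell,j}:\mu_{\ell,j}\le\mu\}$ (of dimension $N_0(\mu)$) because $\lambda+\delta\le\mu$, and invoke the elementary fact again to get $N_0(\mu)\ge m'$. I do not expect a genuine obstacle: the whole argument is soft once Theorem \ref{tDiag} is in hand, and the only points needing a little care are checking that $\mu\le\overline E$ makes Theorem \ref{tDiag} applicable to every eigenvalue $\le\mu-\delta$, and that an open $\delta$-window about such an eigenvalue lies strictly below $\mu$, so that a single cumulative spectral projection dominates all the windowed ones that arise.
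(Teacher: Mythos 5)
Your proposal is correct and follows essentially the same route as the paper: both deduce the counting bounds from Theorem \ref{tDiag} by taking the orthonormal eigenvectors with eigenvalue at most $\mu-\delta$, noting each is within $\eta$ of the span of the other family's eigenvectors with eigenvalue at most $\mu$, and using the hypothesis $\bar N\eta^2<1$ together with Cauchy--Schwarz to force that span (equivalently, the rank of the corresponding projection) to have dimension at least $\min(\bar N,\cdot)$. The only cosmetic difference is that you project onto $\operatorname{span}\{\psi^j:\lambda_j\le\mu\}$ (resp.\ the span of the $\vp^{\ell,j}$ with $\mu_{\ell,j}\le\mu$) and phrase the conclusion as a dimension count, whereas the paper uses the cumulative spectral projections $\Psi_{(0,\mu)}$, $\Phi_{(0,\mu)}$ and phrases it as injectivity of their restriction to that span.
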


\begin{proof}  Consider the first $p$ unit eigenvectors $\psi^1, \dots, \psi^p$ of $A$, where
$p \coloneqq \min(\bar N, N(\mu-\delta))$. By definition of the counting function, the eigenvalues $\lambda_1,\dots,\lambda_p$ of these eigenvalues are less than $\mu-\delta$. Applying the second conclusion of Theorem \ref{tDiag}, we conclude that
$$\| \psi^k- \Phi_{(0,\mu)}\psi^k \|^2
 \le  \frac{\Conn^2}{\delta^3}\, \max_{i,j \in [N]} |a_{i,j}|^{3}\, e^{-\frac{2S}{\sqrt \Conn}}$$
for $k=1,\dots,p$,
Hence, for any nonzero linear combination $\psi = \sum_{j=1}^p \alpha_j\psi_j $, we have
\begin{align*}
\| \psi - \Phi_{(0,\mu)}\psi\| 
&\leq \sum_j |\alpha_j|  \| \psi^j - \Phi_{(0,\mu)} \psi^j\| \\
&\leq \Big( \frac{\Conn^2}{\delta^3}\, \max_{i,j \in [N]} |a_{i,j}|^{3}\, e^{-\frac{2S}{\sqrt \Conn}} \Big)^{1/2} \|\psi\| p^{1/2}\\
&\leq \Big( \frac{\Conn^2}{\delta^3}\, \max_{i,j \in [N]} |a_{i,j}|^{3}\, e^{-\frac{2S}{\sqrt \Conn}} \bar N \Big)^{1/2}  \|\psi\| \\
&< \|\psi \|.
\end{align*}
It follows that the restriction of $\Phi_{(0,\mu)}$ to the span of the $\psi^j$, $j=1,\, \dots, \,  p$, is injective, and hence the rank $N_0(\mu)$ of the matrix $\Phi_{(0,\mu)}$ is at least $p$. In other words, $N_0(\mu) \ge p$. This establishes the latter inequality in \eqref{counting}; the former one is established similarly.
\end{proof}


%
%

%

\vspace{5mm}
\noindent{\bf Acknowledgments}
\vspace{5mm}

We are thankful to Guy David for many discussions in the beginning of this project and to Sasha Sodin for providing the literature and context with regard to the edge state localization.

M. Filoche is supported by the Simons foundation grant 601944. S. Mayboroda was partly supported by the NSF RAISE-TAQS grant DMS-1839077 and the Simons foundation grant 563916. T. Tao is supported by NSF grant DMS-1764034 and by a Simons Investigator Award.

\vspace{5mm}
\noindent {\bf Data availability}
\vspace{5mm}

The data that support the findings of this study are available from the corresponding author upon reasonable request.


\bibliographystyle{unsrt}
\bibliography{Filoche-Mayboroda-Tao}

\end{document}